\documentclass[11pt,a4paper]{article}

\usepackage{cite}
\usepackage{amsmath,amssymb}
\usepackage{amsthm}
\usepackage{microtype}
\usepackage{framed}
\usepackage[top=1in, bottom=1in, left=1in, right=1in]{geometry}

\usepackage{array}
\usepackage{graphicx}
\usepackage{float}
\usepackage{url}
\usepackage{hyperref}
\usepackage{multirow}
\usepackage{color,soul}
\usepackage{xspace}
\usepackage{tabularx}
\usepackage{subfig}

\usepackage{enumitem}
\usepackage{algorithm}
\usepackage{framed}
\usepackage[noend]{algpseudocode}

\usepackage{color}
\xspaceaddexceptions{]\}}
\definecolor{byzantine}{rgb}{0.74, 0.2, 0.64}
\definecolor{cocoabrown}{rgb}{0.82, 0.41, 0.12}
	\definecolor{darkcyan}{rgb}{0.0, 0.55, 0.55}
\makeatletter
\newcommand{\xMapsto}[2][]{\ext@arrow 0599{\Mapstofill@}{#1}{#2}}
\def\Mapstofill@{\arrowfill@{\Mapstochar\Relbar}\Relbar\Rightarrow}
\makeatother
%

\makeatletter\renewcommand{\ALG@name}{Protocol}\makeatother

\newcommand{\vir}[1]{``{#1}''}
\definecolor{orange}{rgb}{1,0.5,0}
\definecolor{wildwatermelon}{rgb}{0.99, 0.42, 0.52}

\newcommand{\colono}{{\em particle}\xspace}

\newcommand{\settler}{\colono}

\newcommand{\inited}{{\em starting}\xspace}
\newcommand{\faulty}{{\em faulty}\xspace}
\newcommand{\faulties}{{\em faults}\xspace}
\newcommand{\marker}{{\em marker}\xspace}
\newcommand{\markers}{{\em markers}\xspace}
\newcommand{\explorer}{{\em explorer}\xspace}
\newcommand{\explorers}{{\em explorers}\xspace}
\newcommand{\recruiter}{{\em follower}\xspace}

\newcommand{\leader}{{\em candidate}\xspace}
\newcommand{\leaderplu}{{\em candidates}\xspace}

\newcommand{\chosen}{{\em leader}\xspace}

\newcommand{\slave}{{\em slave}\xspace}

\newcommand{\probe}{{\em probe}\xspace}
\newcommand{\probes}{{\em probes}\xspace}

\newcommand{\collector}{{\em collector}\xspace}
\newcommand{\follower}{\recruiter}

\newcommand{\opposer}{{\em opposer}\xspace}

\newcommand{\exodusleader}{{\em nofaulty.leader}\xspace}

\newcommand{\algo}{{\sc LineRecovery}\xspace}

\newtheorem{observation}{Observation}

\newtheorem{lemma}{Lemma}

\newtheorem{theorem}{Theorem}

\newcommand{\subriteDescription}[1]{\noindent{\bf #1.}}

\begin{document}

\title{Line-Recovery by Programmable Particles}

\author{Giuseppe Antonio Di Luna\footnotemark[1], \and Paola Flocchini\footnotemark[1], \and Giuseppe Prencipe\footnotemark[2],\and Nicola Santoro\footnotemark[3],\and Giovanni Viglietta\footnotemark[1]}
\def\thefootnote{\fnsymbol{footnote}}
 \footnotetext[1]{\noindent
 School Electrical Engineering and Computer Science, University of Ottawa,
 Canada.}
  \footnotetext[2]{\noindent
  Dipartimento di Informatica, University of Pisa, Italy}
\footnotetext[3]{\noindent
 School of Computer Science, Carleton University, 
Canada.}

\maketitle

\begin{abstract}
Shape  formation has been recently studied in distributed systems of  programmable 
particles.  In this paper we  consider  the {\em shape recovery} problem of restoring the shape when $f$ 
of the $n$ particles have crashed. We focus on the basic {\em line} shape, used as a tool for the construction 
of more complex configurations.

We present a solution to the {\em line recovery} problem by the non-faulty
anonymous particles; the solution works
regardless of the initial distribution and number  $f<n-4$ of faults,   of the local orientations of the non-faulty entities,
and of the number of non-faulty  entities activated in each round (i.e., semi-synchronous adversarial scheduler).

 \end{abstract}

\section{Introduction}

The problems arising in distributed systems composed of autonomous mobile computational entities
has been  extensively studied, in particular the class of {\em pattern formation} problems requiring the
entities to move in  the space where they operate  until, in finite time,  they  form a given pattern
  (modulo translation, rotation, scaling, and reflection), and terminate
 (e.g.,~\cite{AgP06,CiFPS12,FlPS12,xFlPSW08,xFuYKY16,YUKY15}).
 
Very recently, other types of distributed computational universes have been started to be examined (e.g., \cite{EmLUW14}),
most significantly those arising in the large inter-disciplinary field of   studies on {\em  programmable matter} ~\cite{xToM91},
that is matter that has the ability to change its physical properties and appearence
 (e.g., shape,  color, etc.) based on  user input or autonomous sensing. 
  Programmable matter is typically viewed as a  very large number of   very small  (possibly nano-level) computational particles that  are 
  programmed to  collectively perform some global  task by means of local interactions. 
 Such particles  could have applications 
 in a variety of  important situations: smart materials, autonomous monitoring and repair, minimal invasive surgery, etc.

Several theoretical models for programmable matter have been proposed, ranging from DNA self-assembly systems,
 (e.g.,~\cite{xPa14}) 
to metamorphic robots,
 (e.g.,~\cite{WaWA04}),
 to  nature-inspired synthetic
insects and micro-organisms
  (e.g.,~\cite{DeGRSBRS15,xDoFRNV+16}).
Among them, the  {\em geometric Amoebot} model~\cite{xCaDRR16,xDaDGP+17,xDaGPR+17,DeDGRSS15,DeDGRSS14,xDerGMR+17}
is of  particular and immediate interest  from the distributed computing viewpoint.
In fact, in this model (introduced in ~\cite{DeGRSBRS15})
programmable matter  is viewed as a swarm of decentralized autonomous self-organizing  
entities (also called {\em particles}, operating on an  hexagonal tessellation of the plane. 
These  particles   have simple computational capabilities (they are finite-state machines), 
 strictly local interaction and communication capabilities (only with  particles located in neighboring nodes of the hexagonal grid), 
 and limited motorial capabilities (they can move only to   empty neighboring nodes); 
 furthermore, time is divided into round, and their activation at each round is controlled by an adversarial (but  fair) scheduler;
 the scheduler is  said to be sequential, fully synchronous, and arbitrary (or semi-synchronous) depending on  whether it activates at each round only one particles, all particles, or an arbitrary subset of them, respectively.
 A characteristic  feature of the Amoebot model is that, at any round,  a particle  can be  
 {\em contracted} (occupying   one node)  or  {\em expanded}  (occupying  two adjacent  nodes);
 it is through expansions and contractions  that particles move on the grid.
%
In this model, the research focus  has been on applications such as  {\em coating}~\cite{xDaDGP+17,xDerGMR+17}, {\em gathering}~\cite{xCaDRR16},  and   {\em shape formation}~\cite{DeDGRSS15,DeDGRSS14,DeGRSBRS15,DeGRSBRS16,noi2017}. 
The shape formation   (or pattern formation) problem  is  prototypical   for   systems of self-organizing  programmable particles, and particular attention has been given to
special basic shapes like the {\em line} \cite{DeGRSBRS15,DeGRSBRS16,noi2017}, that is used as a tool for the construction of more complex configurations.
%

A common feature of  the existing studies on  these programmable particles  is that all the particles are assumed  to be  fully operational at all times;
that is, faults have never been considered.
 In this paper we address   the presence of faulty particles.

We consider a connected shape of $n$ particles of which $f$ are faulty, and the faults are {\em crashes}.
We are interested in  the problem of the non-faulty particles  efficiently re-configuring  themselves  so to form
the same shape without including any faulty particles. We call this problem {\em shape recovery}, and is  a  basic task of   {self-reconstruction}/{self-repair}  
for  a prescribed shape.

In this paper we study this problem when the  shape the particles form is the {\em line}, hence the problem becomes that of
{\em line recovery}. Solving this  problem requires formulating a set of rules (the algorithm) that
will  allow the non-faulty entities to form the  line within finite time, regardless of the
initial distribution and number of faults and of the local orientations of the non-faulty entities.
Unfortunately this task, as formulated,   is actually  {\em unsolvable},
  even with a fully synchronous scheduler.
In fact, there are initial configurations where unbreakable symmetries make it impossible
to form a single line. 
We thus require
that  either one or two lines of equal size  be formed, depending on the symmetry level of
the  initial configuration.
This   problem  has been studied in \cite{fun2017} in a different computational setting, 
and solved in the case of a square grid assuming a fully synchronous scheduler. 

In this paper  we solve the  {\em line recovery} problem in the Amoebot model under a semi-sysnchronous adversarial scheduler.
We present a  {line recovery} algorithm  allowing $n-f>4$ non-faulty particles  without chirality to correctly  form either a single line or two lines of equal size, 
regardless of the position of the faulty particles and of the  number of non-faulty ones activated at each round.



\section{Model}

We consider the space to be an infinite unoriented anonymous triangular grid $G(V,E)$, where the nodes in $V$ are all equal and edges are bidirectional (see Figure~\ref{fig-grid1}). 
In the system there is a set $P$ of $n$ particles, initially located at distinct positions in $G$.
A subset $F \subset P$ of the particles, with $|F|=f$, is {\em faulty}: a faulty particle does not move or communicate with other particles. The other particles are said to be {\em correct}; the subset of correct particles is denoted by $C = P \setminus F$.

\begin{figure}
\begin{center}
\includegraphics[scale=1]{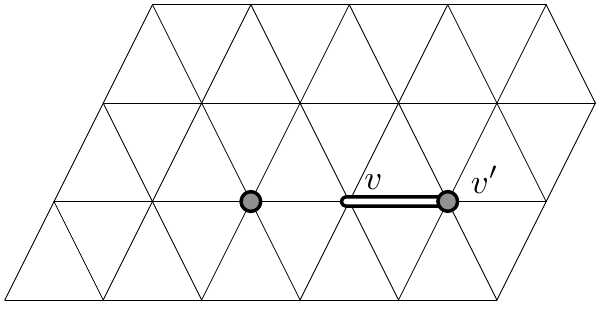}
\end{center}
\caption{A fragment of a triangular grid with two particles(grey circles). One particle is  expanded, with the head in $v'$ and the tail in $v$;
 the other is contracted.\label{fig-grid1}}          

\end{figure}

A particle $p$ assigns to each incident edge a distinct port number; this numbering is local and we do not assume that the particles agree on a common clockwise direction. Two particles that are neighbours in $G$ form a {\em bond}. Each particle has a shared constant size memory associated to each of its local ports, that can be read and written also by a neighbour particle. Moreover, each particle has a constant size memory used to store its state. 

To ease the writing, in our algorithm we will use a message passing terminology, in which, when a particle sends a message to a neighbour, it writes on the shared memory of the receiving neighbour. Symmetrically, if a particle receives a message, it will find it in its shared memory. 

The system works in rounds, and particles are {\em activated} by an external semi synchronous schedule: at each round $r$ the scheduler selects a subset of correct particles, $C_r \subseteq C$, and it activates them; at the same round, the particles in $C \setminus C_r$ are inactive. The scheduler is fair, in the sense that it has to activate each particle infinitely often.  
A particle $p$ moves by a sequence of {\em expansions} and {\em contractions}: a contracted particle occupies a single node $v \in V$, while an expanded one occupies two neighbours node nodes. Initially, each particle occupies exactly a single node; i.e., all particles are contracted. During the execution of the algorithm, a correct contracted particle $p$ that is in $v$ might expand: after the expansion, $p$ will occupy two nodes: $v$ and the neighbour node $w$ where it expanded to. We will say that node $w$ is the {\em head} of the particle, and $v$ is the {\em tail}. Particle $p$ always knows which node is its head and which one is its tail. If a particle is expanded, it can contract back in either  tail or head node (if a particle is contracted, node and tail are the same). We assume that a particle $q$ that is a neighbour of $p$ knows if $p$ is contracted or expanded, and it knows if it is bonded with the tail or the head of $p$. Also, particles are endowed with a failure detector, that takes as input a local port number and returns true if its neighbour (if any) is in $F$.

Upon activation at round $r$, a particle $p$ executes the following operations: \\
{\bf Look}: It reads the shared memories of its local ports, the shared memories of the local ports of its neighbours (if any), and the output of the failure detector on each port.\\
{\bf Compute}:  Using these information and its local state, it performs some local computations. Then, it updates its internal memory and it possibly writes on the shared memories of its neighbours. As an outcome of the Compute operation, $p$ can either decide to stay still or to Move.\\
{\bf Move}: If in the Compute operation $p$ decides to move, then it can either expand to an occupied neighbour location, if contracted at round $r$, or contracts towards its head or tail, if expanded. Moreover, it can perform a special operation called {\em handover}, in which it forces the movement of a correct neighbour particle $q$: if $q$ is expanded and $p$ is contracted, then $p$ forces the contraction of $q$, by pushing $q$ towards its head/tail occupying the tail/head of $q$; otherwise, if $q$ is contracted and $p$ is expanded, then $p$   contracts towards its head/tail forcing the expansion of $q$ in the tail/head of $p$.

Since the scheduler can activate more than one particles in the same round, it is crucial to specify what happens in case of conflicting operations executed by different particles: \\
(i)  If two or more particles try to expand in the same node $v$, then only one succeeds, and the decision depends on the scheduler. The particle that fails to move, will be aware of this at the next activation, by realising that it is still contracted. \\
(ii) If two or more   particles try to execute an handover with a particle $p$ and $p$ is moving, then only one will succeed (which one depends on the scheduler). The ones that fail to move  will be aware of this at the next activation, by realizing that they have not moved.\\
(iii)  If a particle $p$ tries to execute an handover with a particle $q$, and $q$ is moving, then the handover succeeds if and only if also $q$ is executing the same handover operation with $p$. Otherwise, $q$ moves, $p$ fails the handover, and $p$ will be aware of this at the next activation, by realizing that it has not moved.

Given a set of particles $P$, we say that they are on a {\em compact straight line} if they are on a  {\em straight line} and the subgraph induced by their positions is connected. Initially, at round $r=0$, all particles (both correct and faulty) are positioned on a compact straight line (the \vir{initial line}). In the following, we will assume that $n-f \geq 5$. The \algo problem is solved at round $r^{*}$ if, for any round $r \geq r^{*}$, the particles in $C$ either form a straight line with an unique leader particle, or they form two straight lines of equal size each one with its own leader particle.

\section{Line Recovery Algorithm}

\subsection{Overall Description}
In the following, we will assume that the particles can exchange fixed-size messages (this can be easily simulated in our model). Also,
if not otherwise specified, the variable $dir$ of a particle $p$ stores the movement's direction of $p$; that is, it stores the port number where $p$ intends to move; when no ambiguity arises, we will use the expression ``direction of $p$'' to indicate the content of $dir$. Similarly, the content of variable $pre$ stores the location of $p$ in the previous round; again, when no ambiguity arises, we will use the expression ``previous location of $p$'' to denote the content of this variable. 
Moreover, we will say that $p$ is \emph{pointing at} a particle $p'$ if $p$ and $p'$ are neighbors, and the direction $dir$ of $p$ is toward the location occupied by $p'$. Finally, when a particle $p$ changes state from $s$, we will say that $p$ {\em becomes} $s$.

Let $L_0$ be the line where the particles are placed at the beginning, and $L_1$, $L_{-1}$ be the two lines adjacent to $L_0$.
The overall idea to solve the problem is as follows: first, the particles, try to elect either one or two leaders. The election is achieved by having few selected particles move on $L_1$ and $L_{-1}$; also, during this movements, no gap of size larger than $3$ is ever created: this is a crucial invariant to keep, to correctly understand whether a particle is on one of the two extremes of the line. The role of the leader(s) is to start the construction of the final straight line. If there is only one leader,  it will perform a complete tour around $L_0$ (i.e., moving on $L_1$ and $L_{-1}$): during this tours it collects all correct particles, that will follow the leader(s), by attaching to the line they are building. If there are two leaders, then one must be placed on $L_{1}$ and the other on $L_{-1}$. Each one will build a line of correct particles, and then they will compare the length of such lines. If the lines are not equal the symmetry is broken and a single line of correct particles will be formed; otherwise, two equal sized lines will be formed.

In more details, the \algo algorithm is divided in seven sub-algorithms: {\sf Fault Checking}, {\sf Explorer Creation}, {\sf Candidate Creation}, {\sf Candidate Checking}, {\sf Unique Leader} and {\sf  Opposite Sides}.

The algorithm starts by checking whether  there are no \faulties: in this case, all particles in $C$ are already forming a compact line.
 This scenario is detected during the {\sf Fault Checking} sub-algorithm, started  by the particles who occupy the extreme positions of the starting configuration, i.e., by the two particles having  only one neighbor, these particles get state \marker. These two extreme particles send a special message inside the line: if the two messages meet, there are no \faulties.

Should there be  \faulties, the second sub-algorithm (the {\sf Explorer Creation}) is performed, started by all the \settler{{\em s}} who have  a \faulty  neighbor. 
In this sub-algorithm some \settler{{\em s}} become  \explorer{{\em s}} and move out of the line (either on $L_1$ or on $L_{-1}$). The selection of the \explorer{{\em s}} is made in such a way that their movement does not  create ``gaps'' of more than two consecutive empty positions anywhere in the original line (this property is crucial to detect the end of the line in subsequent  sub-algorithms).

\subsection{Sub-Algorithms}

\algblockdefx[Upon]{Upon}{End}
  [1][Unknown]{{\bf Upon Activation in State} {\sf #1} {\bf do:}}%
    {{\bf End}}

\begin{figure*}
\begin{framed}
\footnotesize
\begin{algorithmic}[1]

\Upon[{\em Init}]
  \State Set Line $L_{0}$=getLineDirectionFromActivatedInitAndFaultyNeighbours()
 \If{$\exists p \in Ports | msg.switchtoslave  \in p$}
   \State Set State \slave
    \ElsIf{ $ (\exists ! p \in Neighbours |  (p.state  = \textsf{Init} \lor p.state  = \textsf{\inited} \lor faulty(p))$} \label{init:linebootstrap}   

    \State Set $flag.linetail$
    \State $lineparity=1$
    \State $send(p,msg.coin)$
        \State Set State {\sf \marker}
    \Else
    \State Set State {\sf \inited}
    \EndIf
\End
\\

\Upon[\marker]
	\If{$\exists p \in Ports | msg.markertoleader \in p$}
	   \State Set State {\sf \exodusleader{{\em s}}}
	\ElsIf {$\exists p \in Ports | msg.asktobecandidate \in p \land \neg flag.candidate$} \label{marker:if1}
	\State Set $flag.candidate$
	\State $send(p,msg.candidate)$
	
		\ElsIf  {$\exists port \in Ports | msg.switchtoleader \in port$} 

		 	\State Set direction and flags from $msg.switchtoleader$
		\State Set State {\chosen} 
		
	\ElsIf  {$\exists p \in Neighbours | p= \text{\probe}  \land contracted(p) \land \exists p' \in Neighbours | p' = \text{\collector{\em .counting}}$}  
		\If{$\nexists msg.seen \in p$}
		\State $send(p,msg.seen)$
		\State $send(p',msg.other)$
		\EndIf
		
	\ElsIf  {$\exists p \in Neighbours | p= \text{\collector{\em .done}}  \land contracted(p) \land \exists p' \in Neighbours | p' = \text{\collector{\em .counting}}$}  \label{collectorwinner}
		\State $send(p',msg.winner)$
	\ElsIf  {$\exists p \in Neighbours | p= \text{\collector{\em .done}}  \land contracted(p) \land \exists msg.done  \in opposite(p)$} \label{collectoreven}
		\State $send(p',msg.even)$
		\State Set State {\recruiter}

	\EndIf
\End
\end{algorithmic}
\end{framed}
\caption{Algorithm for  {\em Init}, \marker and {\em \inited} -- Part One \label{algorithm:activated}}
\end{figure*}

\algblockdefx[Upon]{Upon}{End}
  [1][Unknown]{{\bf Upon Activation in State} {\sf #1} {\bf do:}}%
    {{\bf End}}

\begin{figure*}
\begin{framed}
\footnotesize
\begin{algorithmic}[1]
\setcounter{ALG@line}{31}
\Upon[{\em \inited}]

 	\If{$\exists p \in Ports | msg.switchtoslave  \in p$}
   		 \State Set State \slave
  		  \State End Cycle

	 \ElsIf{$\exists p \in Ports | msg.coin \in p$}     
	 	\If{$\exists neighbour \in  opposite(p) \land neighbour.linetail$} \label{activated:newtail} 
    		\State $send(neighbour,msg.newtail)$
		\State $parent=neighbour$
    		\State Set $flag.linetail$
		\State $lineparity=neighbour.lineparity++$
   		 \State $send(p,msg.coin)$
		 \EndIf \label{activated:newtailend} 
		\If{$\exists neigh \in  opposite(p) \land neigh \neq \text{{\em Init}} \land \nexists msg.coin \in neigh.port$} \label{activated:msgcoin}  
    		 
		 \State $send(neigh,msg.coin)$
		\Else
		\State Add $msg.coin$ to $p$                   \label{activated:msgcoinend}
    	 \EndIf
	  \ElsIf{$flag.linetail$}
	   	   \If{$\exists p \in Ports | msg.newtail \in p$}                    \label{activated:unset}

		\State Unset $flag.linetail$ \label{activated1}
		\EndIf
 	 \If{$ \exists neighbour \in Neighbours \land neighbour \neq parent \land neighbour.flag.linetail $}  \label{activated:parition}

		\State $p =  opposite(neighbour)$
		\If{$lineparity=0 \lor lineparity=neighbour.lineparity$}
	 	\State $send(p,msg.markertoleader)$    \label{activated:paritionend}
		 \EndIf
  	 \EndIf
	
	 \EndIf
	   \If{$\exists p \in Ports | msg.markertoleader \in p$}
		\State $p =  opposite(neighbour)$
		\State $send(p,msg.markertoleader)$
		\EndIf
   \If{ $ (\exists  p \in L_{0}  |  faulty(p)) \land (\exists p' \in L_{0} | (p' \neq p \land p' \neq \text{{\em Init}} \land (\nexists port \in Ports | msg.notified \in port) ) $} \label{activated:ifpreexp}
     \State $send(p',msg.notify)$
    \State Set State {\sf pre.\explorer}
    \EndIf
    \If{ $ (\exists! port \in Ports | msg.notify \in port) \land (\nexists p \in L_{0} | p = Init$) }\label{activated:notify}
    \State $send(opposite(port), msg.notified)$  
    \State Add $msg.notify$ to $port$  
        \State Set State {\sf notified}
    \EndIf
\End

\end{algorithmic}
\end{framed}
\caption{Algorithm for  {\em Init}, \marker and {\em \inited} -- Part Two  \label{algorithm:activatedii}}
\end{figure*}

\subriteDescription{Fault Checking}
In the {\sf Fault Checking} sub-algorithm (reported in Figures~\ref{algorithm:activated} and~\ref{algorithm:activatedii}) the particles  detect if there are no \faulties. If so, they elect either one or two \exodusleader{{\em s}}. In case two \exodusleader{{\em s}} are elected, two lines having exactly the same size will be formed.

In the following, we will use the following convention: in the pseudo-code, when there is an {\bf If} statement that checks the presence of a particular message (i.e., Line~\ref{activated:unset}), and the guard of the {\bf If} statement is {\bf true}, then the message is immediately deleted from the memory (i.e., when Line~\ref{activated1} is executed on particle $p$, the $msg.newtail$ of the statement of~\ref{activated:unset}  is deleted from the memory of $p$).

This routine is started by the particles who occupy the extreme positions of the starting configuration, i.e., by the two particles having  only one neighbor, called \marker. 
The general idea behind this sub-algorithm is as follows: each \marker generates a special message that travels towards the other \marker; if the two messages meet, than there are no \faulty particles. 
In particular, two $msg.coin$ messages are generated, one from each \marker, travelling in opposite directions; the other particles, when activated for the first time, switch from {\em Init} to {\em \inited} state, and save the direction of the initial line $L_0$ in its local memory. Without loss of generality, let $m_l$ and $m_r$ be the leftmost and rightmost \marker, respectively; also, let us denote by $S_l$ and $S_r$ the {\em segments} of $m_l$ and $m_r$, respectively: at the beginning,  $S_l$ and $S_r$ contain only $m_l$ and $m_r$, respectively (i.e., each \marker is both the start and the end of its own segment). When the $msg.coin$ sent by $m_l$ reaches the end $S_r$, $S_r$ expands of one unit towards $m_l$, and the $msg.coin$ is sent back to $m_l$; symmetrically, when the $msg.coin$ sent by $m_r$ reaches the end $S_l$, $S_l$ expands of one unit towards $m_r$, and the $msg.coin$ is sent back to $m_r$. During this expansion process, the end of each segment stores the parity of the length of the segment it belongs to. This process is iterated until, if there are no \faulty particles, the ends of the two segments will be neighbors: when this occurs, if the two segments have the same size, two leaders will be elected; otherwise their sizes differ by at most one unit and a unique leader is elected, that is the \marker of the segment with length of parity 0.
An example run is in Figure \ref{xf:shape}.

\begin{figure*}[tbh]
\begin{center}
  \subfloat[The extremes of the line start the procedure to partition the line. Initially, each extreme is also a tail, represented by the marked circle, and they have counter $1$, represented by the light grey color.]{
  \includegraphics[scale=0.5]{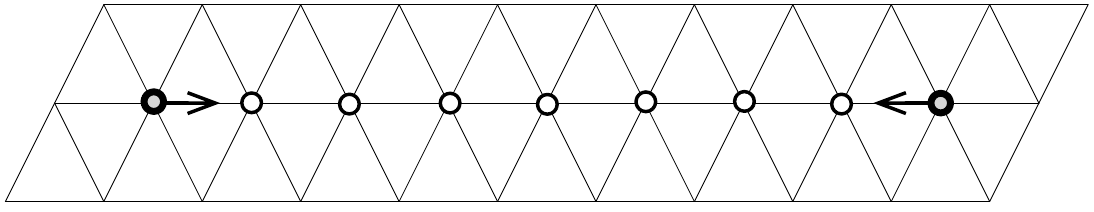}
  }
    \qquad\quad
  \subfloat[The $msg.coin$, represented by an arrow, moves from on tail to the other.]{
  \includegraphics[scale=0.55]{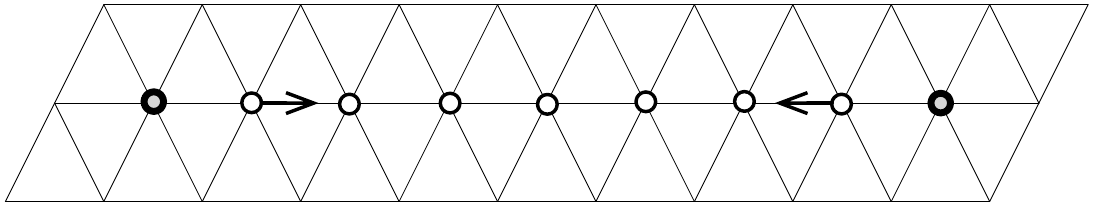}
  }
   \\
      \subfloat[The $msg.coin$ reaches a tail.]{
  \includegraphics[scale=0.55]{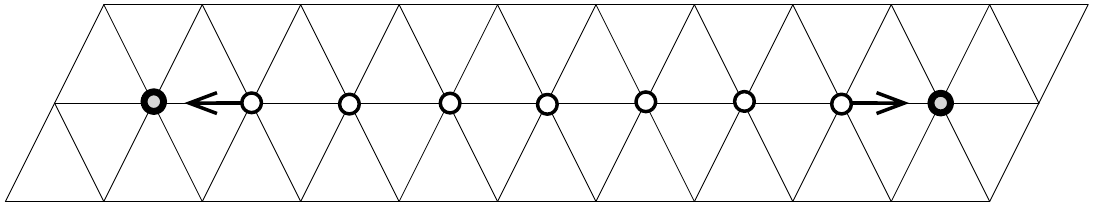}
  }
   \qquad\quad
  \subfloat[New tails and new $msg.coins$ are created. Each segment of the line increases by one, we can see that the new tail have value $0$ for the counter, represented by the dark grey color.]{
  \includegraphics[scale=0.55]{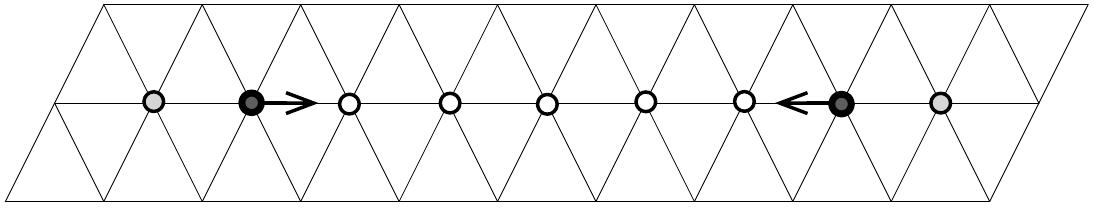}
  }
     \\
      \subfloat[The $msg.coins$ travel with different speeds, due to the semi-synchronous activations.]{
  \includegraphics[scale=0.55]{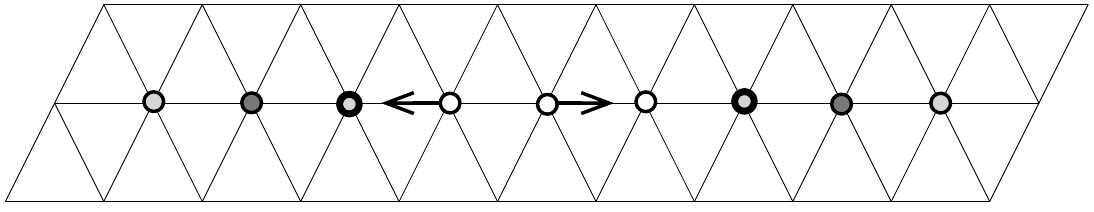}
  }
   \qquad\quad
  \subfloat[Two $msg.coins$ travel towards the same tail.]{
  \includegraphics[scale=0.55]{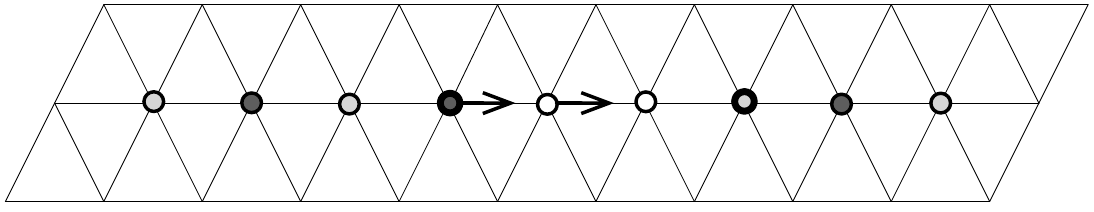}
  }
       \\
      \subfloat[A segment grows of two units.]{
  \includegraphics[scale=0.55]{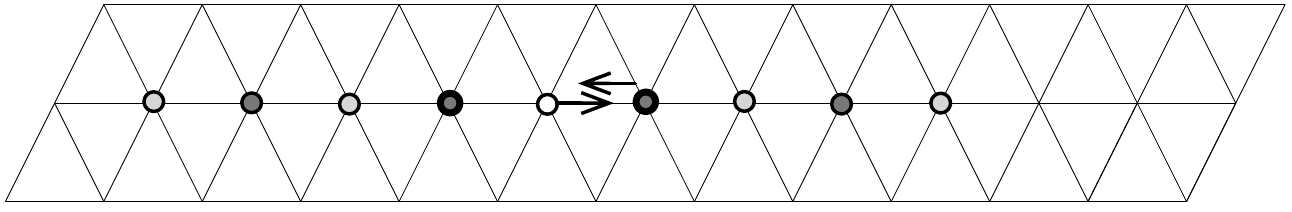}
  }
   \qquad\quad
  \subfloat[Two tails are neighbours. They have different parity so a single leader can be elected]{
  \includegraphics[scale=0.55]{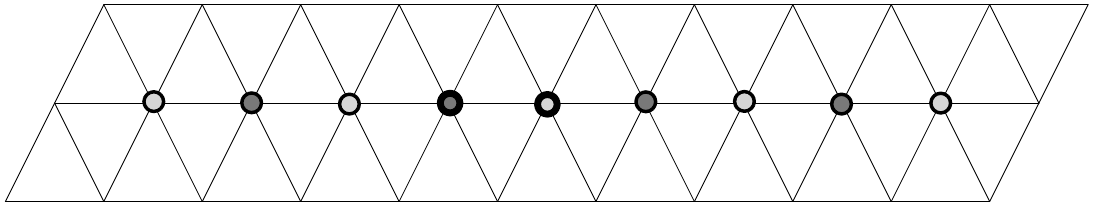}
  }
\end{center}
\caption{Run of the algorithm where the Fault Check subphase elects an unique leader.}
\label{xf:shape}
\end{figure*}

\begin{lemma}\label{lm:0}
By executing procedure {\sf Fault Checking}, the size of the two segments differ by at most $1$ unit.
\end{lemma}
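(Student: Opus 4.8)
The plan is to recast the statement as an invariant and prove it holds after every ``growth'' event, rather than only at termination. Let $a=|S_l|$ and $b=|S_r|$ and set $D=a-b$. Initially each segment consists of a single \marker, so $a=b=1$ and $D=0$. As described in the overview, the only way $a$ or $b$ ever changes is when one of the two travelling $msg.coin$'s reaches a tail: it makes that tail's segment grow by exactly one unit and bounces back toward the opposite tail. So the whole process is a sequence of discrete growth events, and I would prove $|D|\le 1$ after each of them; since this holds throughout, it holds in particular at the round where the two tails become neighbours, which is what the lemma asks.

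The key idea I would use is a conserved quantity built from the coins' travel directions. To each of the two coins $i\in\{1,2\}$ assign a \emph{phase} $d_i\in\{-1,+1\}$, where $d_i=+1$ means the coin is currently heading toward $S_r$'s tail (so its next bounce will grow $S_r$) and $d_i=-1$ means it is heading toward $S_l$'s tail. The crucial observation is that $D$ and the $d_i$ change \emph{only} at growth events, and they change in lockstep: growing $S_r$ sends $D\mapsto D-1$ and flips the responsible coin's phase from $+1$ to $-1$ (a change of $-2$), while growing $S_l$ sends $D\mapsto D+1$ and flips that coin's phase from $-1$ to $+1$ (a change of $+2$). Hence the quantity $\Psi:=D-\tfrac{1}{2}(d_1+d_2)$ is unchanged by every event, i.e.\ $\Psi$ is invariant.

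To finish I would evaluate $\Psi$ at the start: $D=0$, the coin emitted by $m_l$ heads right so $d_1=+1$, and the coin emitted by $m_r$ heads left so $d_2=-1$, giving $\Psi=0-\tfrac{1}{2}(1-1)=0$. By invariance $\Psi\equiv 0$, so at all times $D=\tfrac{1}{2}(d_1+d_2)$; since $d_1,d_2\in\{-1,+1\}$ the right-hand side lies in $\{-1,0,+1\}$, and therefore $|a-b|=|D|\le 1$ forever. I would present the formal version as an induction on the ordered sequence of growth events, the inductive step being exactly the two sign computations above.

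The main obstacle, and the place I would spend the most care, is justifying the bookkeeping against the pseudocode so that the timing-free conservation argument is actually applicable. I must check that (i) a coin always has a well-defined phase — it is always located between the two current tails and moving toward one of them, and a bounce reassigns its direction atomically at the growth event — and that the coin count stays exactly two; and (ii) the semi-synchronous adversary cannot break the accounting even though it may run the coins at wildly different speeds and pile both of them against one tail (the ``a segment grows of two units'' situation in Figure~\ref{xf:shape}). Point (ii) is precisely where the conserved-quantity viewpoint pays off: because $\Psi$ only constrains per-event bookkeeping and is independent of \emph{when} events fire, a ``grow-by-two'' is simply two consecutive single-growth events, each preserving $\Psi$, and the blocking rule (``Add $msg.coin$ to $p$'') merely delays a coin without altering its phase. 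What remains is the careful but routine verification that no other rule in the {\sf Fault Checking} code ever modifies $a$, $b$, or a coin's travelling direction outside a bounce.
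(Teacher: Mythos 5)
Your proof is correct, but it takes a genuinely different route from the paper's. The paper argues by induction on rounds: the base case is the creation of the \markers, and the inductive step is a case analysis on whether the current difference is $0$ or $1$, the latter case being closed by a causality argument (the coin a segment consumes must have been emitted by the \emph{other} segment's tail at its last growth, so a lead of $1$ cannot be extended without contradicting the induction hypothesis at an earlier round). You instead exhibit the conserved quantity $\Psi = D - \tfrac{1}{2}(d_1+d_2)$ over the sequence of discrete growth events. Your version buys a strictly stronger statement -- the exact identity $D = \tfrac{1}{2}(d_1+d_2)$ -- which makes the bound immediate, cleanly explains the ``segment grows by two units'' scenario of Figure~\ref{xf:shape} (it occurs precisely when both coins point at the currently smaller segment), and removes the somewhat delicate appeal to the induction hypothesis at unspecified earlier rounds that the paper's contradiction step relies on. The one point you gloss over is that the semi-synchronous scheduler may create the two \markers at different rounds, so ``initially $a=b=1$ with two coins of phases $+1$ and $-1$'' does not hold from round $0$; you should note that no growth event can fire before both \markers (hence both tails and both coins) exist -- a coin targeting a nonexistent tail is held, per the $\nexists\, neigh \neq \textit{Init}$ guard -- so $a,b\in\{0,1\}$ trivially bounds $|D|$ during the startup phase and $\Psi=0$ holds at the first round where the two-coin regime begins. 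With that remark, and the routine code check you already flag, the argument is complete.
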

\begin{proof}
We first show that the procedure that expands the two tails forces the size of the two segments to differ by at most $1$.
In fact, the size of a segment is the number of nodes between a \marker and its respective tail. The proof is by induction on the number of rounds. 
Let $r$ be the first round at which a \marker is created. At the end of round $r$, if the \marker exists, each segment has size $1$; otherwise the segments have both size $0$. Therefore the claim is verified. 
 
Let us suppose that the claim holds until round $r+k$. At round $r+k+1$ each segment may increase its size of at most one unit; if at round $r+k$ the difference between the size of the two segments was $0$, then the statement is verified. Otherwise, if the difference was $1$, then we need to show that the bigger segment, say $s$, does not increase until the smaller one, say $s'$, increases by one unit.

A segment, to increase its size, has to consume a message $msg.coin$. This message can only be produced by the tail of the other segment, and it is produced only when the segment grows by one unit. Notice that there is no $msg.coin$ travelling from the tail of the smaller segment $s'$ to $s$. Otherwise, the difference in size between the two segments would have been more than $1$ at a round $r' < r+k$: $s'$ would have grown of one unit, and still be smaller than $s$.
\end{proof}

\begin{theorem}\label{thf:0}
If $f=0$, then the procedure {\sf Fault Checking}  correctly solves the \algo problem.
\end{theorem}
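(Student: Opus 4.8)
The plan is to show that, when $f=0$, the procedure {\sf Fault Checking} never triggers any of its fault-dependent branches: the failure detector always returns false, so the tests for a \faulty neighbour (e.g.\ the guard of the pre.\explorer branch) are never satisfied and no particle ever becomes an \explorer. Consequently no particle ever moves throughout the execution; the particles exchange only messages and remain on the initial compact line $L_0$. Since $L_0$ is already a straight line, the whole problem reduces to proving that the procedure terminates having elected the correct number of \exodusleader{{\em s}} in the correct positions, and that the resulting configuration is stable for every round $r \ge r^*$.

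First I would make precise the invariant maintained by the coin/segment mechanism. The two endpoints of $L_0$ (the only particles with a single neighbour) become \marker{{\em s}} $m_l,m_r$, each initially the tail of its own singleton segment $S_l,S_r$. A $msg.coin$ originating at $m_l$ travels along the line until it reaches the current tail of $S_r$; there $S_r$ grows by one particle toward $m_l$ (its $flag.linetail$ advances one step), the new tail updates $lineparity$, and a $msg.coin$ is echoed back toward $m_l$ — and symmetrically for coins from $m_r$. I would prove, by induction on the number of rounds, the invariant that (i) each segment's tail always stores the parity of the length of its segment, (ii) the two segments are vertex-disjoint and jointly occupy a prefix and a suffix of $L_0$, and (iii) by Lemma~\ref{lm:0} their sizes differ by at most $1$.

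Next I would establish termination and progress. Using the fairness of the scheduler and the fact that, with $f=0$, the line is connected and no particle blocks message propagation, every coin in transit eventually reaches the opposite tail and causes the receiving segment to grow. Since $L_0$ has finitely many particles, the two segments cannot grow forever without meeting; hence after finitely many rounds the tails of $S_l$ and $S_r$ become neighbours and together the two segments cover all $n$ particles. Here I would handle the semi-synchronous subtlety illustrated in Figure~\ref{xf:shape}: because coins travel at adversarially different speeds, two coins may accumulate at the same tail and make a segment grow by two units in a single step; I would check that the $lineparity$ update stays consistent with the true segment length in this case, so that invariant (i) survives, and that no premature election occurs before the tails are adjacent.

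Finally I would analyse the election at the meeting point. When the two tails are neighbours, the segment sizes $a,b$ satisfy $a+b=n$ and $|a-b|\le 1$. If $n$ is even then $a=b=n/2$, the two tails read equal parities, and two \exodusleader{{\em s}} are elected, so the particles constitute two straight lines of equal size, each with its own leader. If $n$ is odd then $\{a,b\}=\{(n-1)/2,(n+1)/2\}$, the parities differ, and the unique \exodusleader is the \marker of the segment whose length has parity $0$, so the particles form a single straight line with a unique leader. In either case the reached states are terminal — no further message triggers a state change and no particle moves — so the configuration is stable for all $r\ge r^*$, exactly as the \algo problem requires. The main obstacle I expect is the progress/termination argument under the adversarial semi-synchronous scheduler, together with verifying that the parity counters remain correct when a segment grows by two in one round; this is precisely what guarantees that the equal-versus-differ-by-one decision, and hence the one-leader-versus-two-leaders outcome, is always taken correctly.
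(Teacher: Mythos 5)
Your proposal is correct and follows essentially the same route as the paper: the paper's proof also consists of a brief progress claim (a segment eventually grows while the tails are not adjacent) followed by the even/odd case analysis on $n$, invoking Lemma~\ref{lm:0} to rule out a size gap of two or more and thereby deciding between one and two leaders. Your additional observations (that no particle moves when $f=0$, and the care about parity consistency under the semi-synchronous scheduler) are elaborations the paper leaves implicit rather than a different argument.
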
 
\begin{proof} It is immediate to see that, as long as the two tails are not neighbours, a segment will eventually grow. We distinguish two cases:

\begin{enumerate}
\item If the line has an even number of particles, then  {\sf Fault Checking}  divides the initial line in two segments of equal size. By contradiction, let us assume  that
one of the two segments is greater than the other, and that two tails are touching. Being the line even, this can only be possible if one of the two segments is two units or more longer than the other one: by Lemma~\ref{lm:0}, this is not possible, and two \markers are both elected as leader.

\item If the line has an odd number of particles, and the two tails are touching, we have that only one of the segment has an odd size. In fact, they cannot have both odd size, otherwise the initial line would have an even number of particles, having a contradiction. Therefore, the two tails have different parities, and by executing {\sf Fault Checking} only one leader is elected. 
\end{enumerate}

In both cases, the theorem follows.
\end{proof}

\subriteDescription{Explorer Creation}
The sub-algorithm {\sf Explorer Creation} is used to bootstrap the other sub-algorithms: its execution is started from sub-algorithm {\sf Fault Checking} if in the system there is at least one \faulty particle. The main purpose of this sub-algorithm is to select, among the correct particles,  at least three  \explorers, who will
 move out of line  $L_0$ without creating empty ``gaps'' of more than two consecutive positions. This is done as follows. 
 
 \begin{figure}[H]
\begin{framed}
\footnotesize
\begin{algorithmic}[1]
\Upon[{\em pre.}\explorer]
	 \If{$\exists port \in Ports | msg.switchtoslave  \in port$}
    	\State Set State \slave
    	\State End Cycle
 	\EndIf
	\State $l=getNeighbourOutside(L_0)$ \label{movepreexp1}
	\State $direction=left$
	\State $move(l)$
	\State Set State {\sf \explorer}  \label{movepreexp2}
\End
\\
\Upon[{\em notified}]
	 \If{$\exists port \in Ports | msg.switchtoslave  \in port$}
   	 \State Set State \slave
    	\State End Cycle
	 \EndIf
	 \State $cond_1:= \nexists p \in Ports | msg.notified \in p$
	 \State $cond_2 := \nexists p \in opposite(port(msg.notify)) |   p = pre.\text{\explorer}$
	 \State $cond_3 :=  \nexists  msg.notify \in opposite(port(msg.notify)$
	    \If{ $ (cond_1 \land cond_2 \land cond_3)$} \label{notified:if1}
	        \State Set State {\sf pre.\explorer}
  	  \EndIf
\End
\end{algorithmic}
\end{framed}
\caption{Algorithm for {\em pre.}\explorer and {\em notified} \label{algorithm:preexplorer}}
\end{figure}

If a particle in {\sf \inited} state (from {\sf Fault Checking})  has a  \faulty neighbour, then it becomes {\em pre.}\explorer, and it notifies this decision to any non \faulty neighbouring  particle.  If the neighbour is correct but it is still in the {\em Init} state, it waits (see Line~\ref{activated:ifpreexp} of Figure~\ref{algorithm:activatedii}).

A  {\sf \inited} particle that, upon activation, finds such a notification message, becomes \emph{notified} (Line~\ref{activated:notify} of Figure~\ref{algorithm:activatedii}).
A \emph{notified} particle that is not a \marker becomes  a {\em pre.}\explorer if, on the opposite port to the one containing the notification, there is no neighbour that will become either \emph{notified} or {\em pre.}\explorer; otherwise, it stays in the \emph{notified} state. 
Note that, when a particle changes state to \emph{notified}, it  sends a message to its other neighbour, to avoid that it also becomes a {\em pre.}\explorer.

If a {\em pre.}\explorer is activated, it becomes an \explorer, it moves outside $L_{0}$ on
$L_{j}$ with $j \in \{1,-1\}$, and it picks as direction on $L_{j}$ the left one, according to its chirality (see Lines~\ref{movepreexp1}-\ref{movepreexp2} of Figure~\ref{algorithm:preexplorer}). The direction is stored in the local variable $direction$, that is always
pointing to some location on   line $L_{j}$ on the left of the current one according to the handedness of the particle. 
Notice that the particle might fail to leave $L_{0}$, because of collisions with other particles: to handle this case, an \explorer that finds itself on line $L_{0}$ tries to expand to go
outside $L_0$.

Let a {\em sequence of particles} be a set of consecutive particles; also, let a {\em gap} be the maximum number of empty locations between two particles (either correct or \faulty) on line $L_0$.
An example run of Explorers creation is in Figure \ref{xf:shape1}.

\begin{figure*}[ht]
\begin{center}
  \subfloat[Initial coinfiguration: we have a sequence of two correct particle,  a sequence of $3$ faulty particles, a sequence of $3$ correct particles and a faulty particle at the end of the line.]{
  \includegraphics[scale=0.55]{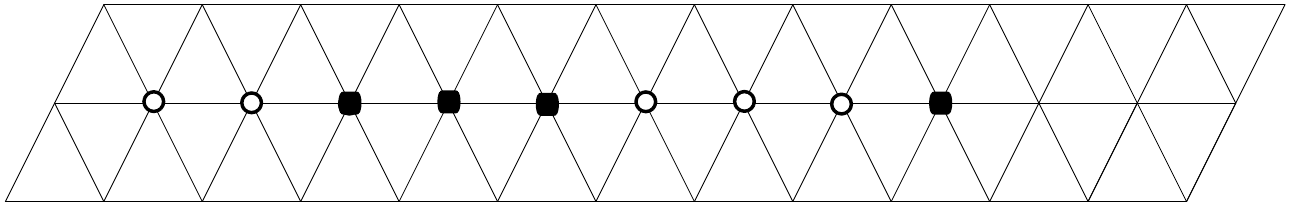}
  }
    \qquad\quad
  \subfloat[The left endpoint of the line becomes a \marker, red star. The pre-\explorers are the yellow crosses and they are neighbours of faulty particles. The notified particle is the cyan rectangle and it is sending a notified msg to the white particle, that is still in the \inited state. This implies that the white particle will never become an \explorer. ]{
  \includegraphics[scale=0.55]{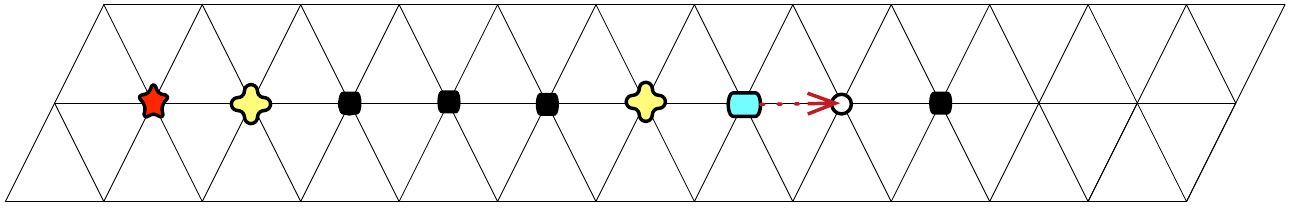}
  }
   \\
  \subfloat[The pre-\explorers become \explorers and the notified becomes a pre-\explorer.]{
  \includegraphics[scale=0.55]{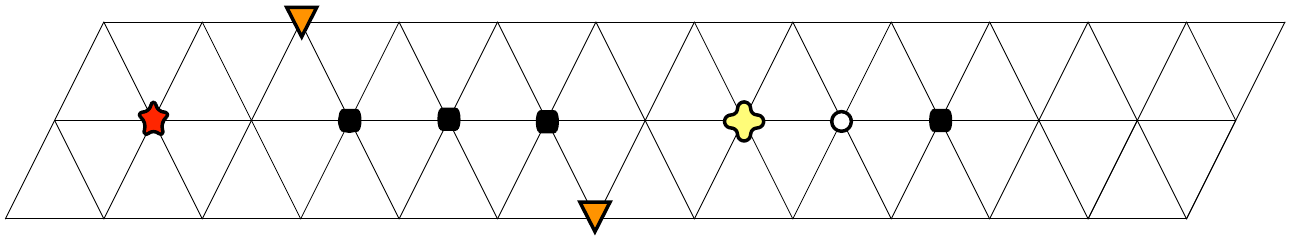}
  }
    \qquad\quad
  \subfloat[All the \explorers have been created.  ]{
  \includegraphics[scale=0.55]{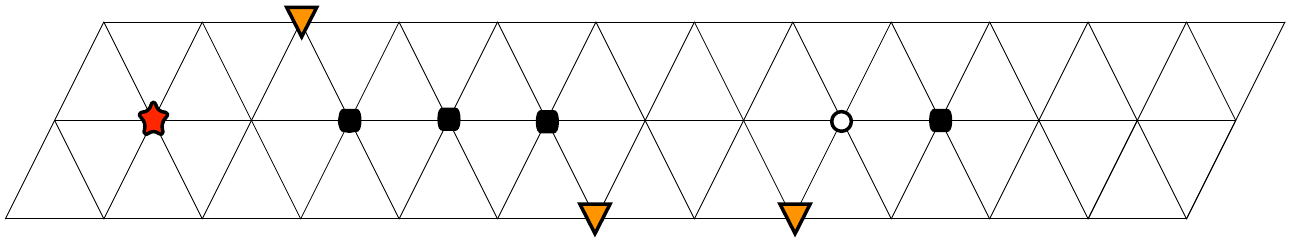}
  }
   
\end{center}
\caption{Explorer Creation, example run.}
\label{xf:shape1}
\end{figure*}

\begin{lemma}\label{lemma:explorercreation}
Sub-algorithm {\sf Explorer Creation} never creates a gap of size $3$. Moreover, for each sequence of correct particles of size greater or equal $3$,
at least two pre.\explorer will be created. 
\end{lemma}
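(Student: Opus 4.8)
The plan is to prove the two assertions separately, after a common reduction. First I would observe that during {\sf Explorer Creation} a node of $L_0$ becomes empty only when the correct particle occupying it turns into a {\em pre.}\explorer and subsequently leaves the line, whereas \faulty particles never move; hence the set of empty nodes of $L_0$ grows monotonically in time, and the maximal number of consecutive empty nodes can only increase. It therefore suffices to examine the configuration in which every particle that ever leaves has left: if there are no three consecutive empty nodes there, then by monotonicity no gap of size $3$ is created at any round. Thus the first assertion reduces to the purely combinatorial claim that no three consecutive correct particles ever become {\em pre.}\explorer{{\em s}}.

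For this claim I would argue by contradiction, assuming that three consecutive correct particles $c_i,c_{i+1},c_{i+2}$ all become {\em pre.}\explorer{{\em s}}. The middle particle $c_{i+1}$ has two correct $L_0$-neighbours, so it is adjacent to no \faulty particle and cannot fire the \faulty-adjacency rule of Line~\ref{activated:ifpreexp}; it must therefore first enter the \emph{notified} state and then leave through the rule of Figure~\ref{algorithm:preexplorer}. The crucial invariant is that, upon becoming \emph{notified} from (say) the notify sent by $c_i$, particle $c_{i+1}$ writes a $msg.notified$ on the opposite neighbour $c_{i+2}$ (Line~\ref{activated:notify}). I would then show that this message is never deleted and that it blocks $c_{i+2}$ from ever becoming a {\em pre.}\explorer: it falsifies the guard of Line~\ref{activated:ifpreexp} (which requires the absence of $msg.notified$) and the condition $cond_1$ of Figure~\ref{algorithm:preexplorer}. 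Combined with the guard $cond_2$ of $c_{i+1}$, which forbids $c_{i+1}$ from leaving while $c_{i+2}$ is already a {\em pre.}\explorer, this shows that $c_{i+2}$ never leaves, a contradiction; the symmetric case, where the notify reaches $c_{i+1}$ from $c_{i+2}$, blocks $c_i$ instead.

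For the second assertion I would run a case analysis on the two ends of a maximal correct sequence $c_1,\dots,c_k$ with $k\ge 3$. Each end is adjacent either to a \faulty particle or to an extreme of $L_0$ (a \marker), and since {\sf Explorer Creation} is executed only when $f\ge 1$, a sequence with a \marker at both ends would be the whole fault-free line, which is excluded; hence every such sequence has at least one \faulty-adjacent end. If both ends are \faulty-adjacent, then $c_1$ and $c_k$ leave directly by Line~\ref{activated:ifpreexp}, producing two {\em pre.}\explorer{{\em s}}. If exactly one end, say $c_k$, is \faulty-adjacent, then $c_k$ leaves and notifies $c_{k-1}$; I would check that $c_{k-2}$ does not leave (being either the \marker at the far end, or a particle that receives only a $msg.notified$ and no notify, so that neither Line~\ref{activated:ifpreexp} nor Line~\ref{notified:if1} ever fires for it), whence the guards $cond_1,cond_2,cond_3$ all hold for $c_{k-1}$ and it too becomes a {\em pre.}\explorer, again giving two.

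The main obstacle is the semi-synchronous interleaving: the guards of Figure~\ref{algorithm:preexplorer} are evaluated at the activation round, so I must verify the blocking invariant for \emph{every} admissible activation order rather than for a single trace. Concretely, I would prove that the $msg.notified$ written by $c_{i+1}$ is already present whenever $c_{i+2}$ is activated after it, so that if $c_{i+2}$ has not yet become a {\em pre.}\explorer it is permanently blocked, while if it became one earlier then $cond_2$ blocks $c_{i+1}$ --- closing the argument in all orderings. A secondary subtlety, relevant only to short sequences with two \faulty-adjacent ends, is to rule out that the two notification chains merge into a forbidden pattern; this is handled by the uniqueness guard $\exists!\,port\mid msg.notify$ of Line~\ref{activated:notify}, which prevents any particle receiving two notifies from ever entering the \emph{notified} state.
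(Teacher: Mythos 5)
Your proposal is correct, but it is organized quite differently from the paper's proof, and the comparison is instructive. The paper proceeds by an exhaustive case analysis on the length of a run of correct particles (size $2$, $3$, $4$, and greater than $4$), tracking named particles $p_1,p_2,p_3,\dots$ and checking in each case which of them can become {\em pre.}\explorer{\em s}; the general case is dispatched with the single observation that only a \faulty-adjacent particle and its immediate neighbour can ever leave. You instead first reduce the gap claim to one uniform combinatorial statement --- no three consecutive correct particles all become {\em pre.}\explorer{\em s} --- via the monotonicity of emptiness on $L_0$, and then prove that statement by a contradiction centred on the middle particle: it has no \faulty $L_0$-neighbour, so it must pass through \emph{notified}, and the $msg.notified$ it writes on its opposite neighbour (Line~\ref{activated:notify}) together with $cond_2$/$cond_3$ of Line~\ref{notified:if1} blocks one of the two sides in every interleaving. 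This buys uniformity over all run lengths and makes the key invariant explicit, where the paper's size-by-size analysis is more concrete but leaves the long-run case to a one-line remark; your treatment of the second assertion (case analysis on whether each end of a maximal correct run is \faulty-adjacent or a line extreme, using $n-f\geq 5$ implicitly through $f\geq 1$) matches the paper's in substance. One small imprecision to tighten when you write this out: in the ordering where $c_{i+2}$ has already turned into an \explorer and expanded off $L_0$ before $c_{i+1}$ is next activated, it is not $cond_2$ that blocks $c_{i+1}$ (there is no longer a {\em pre.}\explorer at the opposite location) but $cond_3$, via the $msg.notify$ that $c_{i+2}$ necessarily deposited on $c_{i+1}$ when it fired Line~\ref{activated:ifpreexp}; your own ``verify every admissible activation order'' step should catch this, but it is worth naming explicitly.
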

\begin{proof}
The proof considers runs of correct particles of different size. First, it is easy to see that for any run of size $2$ we cannot create a gap of size $3$.

For a run of size $3$, let $p_1, p_2, p_3$ be the placement of the three correct particles.
First, notice that at least two {\em pre.}\explorer{\em s} will be created: either the two immediate neighbours of a \faulty particle, or an immediate neighbour of a \faulty
and a {\em notified}.
Without loss of generality, let $r$ be the first round at which particle $p_1$ becomes a {\em pre.}\explorer. Also, let $r' \geq r$ be the first round at which the particle $p_2$ wakes up after $r$. If also $p_3$ becomes a {\em pre.}\explorer before round $r'$, then $p_2$ receives two notify messages.
Therefore, $p_2$ will not became {\em notified}; hence, it cannot become {\em pre.}\explorer. 
Otherwise, if $p_3$ becomes {\em pre.}\explorer at round $r'$, $p_2$ receives an additional notify message at round $r'$; so, when $p_2$ is activated after becoming {\em notified}, the
predicate that leads the particle to be a {\em pre.}\explorer cannot be verified ($\exists! port \in Ports | msg.notify \in port$, see Line~\ref{notified:if1} of Figure~\ref{algorithm:preexplorer}), hence $p_2$ cannot become {\em pre.}\explorer.
Finally, by predicate $(\exists p' \in L_{0} | (p' \neq p \land p' \neq \text{{\em Init}} \land (\nexists port \in Ports | msg.notified \in port) )$, $p_3$ cannot become a {\em pre.}\explorer after round $r'$. 

For a run of size $4$, let $f, p_1, p_2, p_3, p_4, f$ be the placement of the particles, with $f$ the faulty processes, and let us examine the round $r$ at which $p_2$ becomes {\em notified}.
Let us suppose, w.l.o.g., that $r$ is the first round at which a particle becomes {\em notified}.  If at $r$ particle $p_3$ does not become {\em notified},
then notice that $p_3$ cannot become a {\em pre.}\explorer either (by predicate $(\nexists port \in Ports | msg.notified \in port)$), so a gap of size $3$ cannot be created. If at $r$ also $p_3$ becomes {\em notified}, then,
for the same predicate, both $p_3$ and $p_2$ cannot become {\em pre.}\explorer. 
Also, notice that at least two {\em pre.}\explorer{\em s} will be created: the two immediate neighbours of a \faulty.
Notice, that if the run is $f,p_1,p_2,p_3,p_4$, with $p_4$  the last process of the initial line $L_0$, then both $p_1,p_2$  become {\em pre.}\explorer; thus 
also in this case at least two {\em pre.}\explorer{\em s} are created. 

In the general case of a run of size greater than $4$, notice that only the immediate neighbour of a \faulty particle and its neighbour can become {\em pre.}\explorer. So, also in this case, a gap of size $3$ cannot be created. It thus follows that also in this case at least $2$ {\em pre.}\explorer{\em s} will be created, and the lemma follows. 
\end{proof}

\begin{observation}\label{observation:makermarker}
If at round $r=0$, there is a correct particle $p$ at the end of the initial line, then this particle will become a \marker.
\end{observation}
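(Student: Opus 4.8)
The plan is to show that the very first time $p$ is activated it executes the branch guarded by Line~\ref{init:linebootstrap} of Figure~\ref{algorithm:activated}, which sets its state to \marker. First I would record the static facts. At round $0$ the particles form a compact straight line $L_0$, and $p$ occupies one of its two extreme positions; hence $p$ has exactly one neighbour $q$, and $q$ lies on $L_0$. Since $n\geq 5$, the line $L_0$ has at least five nodes, so $q$ is an interior particle of $L_0$ and itself has two neighbours on $L_0$. Because an inactive particle never changes state, $p$ remains in state {\em Init} until its first activation, which occurs at some round $r_p$ by fairness of the scheduler.

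Next I would evaluate the guards of the {\em Init} handler at round $r_p$. Two facts suffice: (i) no $msg.switchtoslave$ has been written into a port of $p$ before $r_p$, so the first {\bf If} is not taken; and (ii) at round $r_p$ the particle $q$ is in state {\em Init}, in state {\em \inited}, or is \faulty. Granting (i) and (ii), the guard of Line~\ref{init:linebootstrap} evaluates to {\bf true}: $p$ has the single neighbour $q$, and $q$ satisfies $q.state=\textsf{Init}\lor q.state=\textsf{\inited}\lor faulty(q)$, so the $\exists!$ quantifier is met (note that $q$ itself, having two neighbours on $L_0$, cannot meanwhile have become a \marker, and $q$ being a qualifying neighbour keeps the endpoint structure intact). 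Consequently $p$ sets $flag.linetail$, initialises $lineparity$, emits its $msg.coin$, and becomes \marker, as claimed.

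The main obstacle is justifying (i) and (ii), i.e. ruling out that the collection machinery has already reached the $p$-end of the line and turned $q$ into a \slave, which would both falsify (ii) and, through the recruiting $msg.switchtoslave$, falsify (i). I would handle this with a causality/minimality invariant: a correct particle can enter state \slave, and a $msg.switchtoslave$ can be generated, only after some particle has become a \chosen, which in the design happens only once the leader election has produced a leader; and that election cannot complete on the side of $p$ while the extreme node $p$ is still in state {\em Init}. Formally, I would prove by induction on rounds that, until $p$ has become a \marker, every correct particle of $L_0$ is in state {\em Init} or {\em \inited} and no $msg.switchtoslave$ is present in the system: the base case is round $0$, and the inductive step inspects each state transition that could create a \slave or a $msg.switchtoslave$ and observes that every such transition is enabled only by a predecessor configuration excluded by the hypothesis. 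Combining this invariant with the existence of the activation round $r_p$ then yields (i) and (ii), completing the argument.
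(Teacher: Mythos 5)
Your overall plan --- show that at $p$'s first activation the guard of Line~\ref{init:linebootstrap} of Figure~\ref{algorithm:activated} is satisfied --- matches the paper's, and your reduction to facts (i) and (ii) is the right decomposition. The gap is in how you propose to establish them: the invariant ``until $p$ has become a \marker, every correct particle of $L_0$ is in state {\em Init} or {\em \inited} and no $msg.switchtoslave$ is present'' is false, so the induction you outline cannot be carried out. The adversarial scheduler may leave $p$ inactive for arbitrarily many rounds (fairness only guarantees eventual activation), and in the meantime the opposite endpoint becomes a \marker, the neighbours of \faulty particles far from $p$ become {\em notified}, {\em pre.}\explorer{\em s} and then \explorer{\em s} leaving $L_0$, and so on; each of these transitions breaks your hypothesis at the inductive step (note also that your step, as described, only inspects transitions creating a \slave or a $msg.switchtoslave$, which is weaker than the full invariant you assume). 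What is actually true --- and what the paper's one-line proof relies on --- is a purely local property of the single neighbour $q$ of $p$: the only exits from $\{${\em Init}, {\em \inited}$\}$ available to $q$ in this phase are the transitions to {\em pre.}\explorer (Line~\ref{activated:ifpreexp} of Figure~\ref{algorithm:activatedii}) and to {\em notified} (Line~\ref{activated:notify}), and both are explicitly guarded by the requirement that the relevant $L_0$-neighbour not be in state {\em Init}; since $p$ is that neighbour and remains in {\em Init} until its first activation, $q$ is still in a qualifying state when $p$ first runs, and the $\exists!$ test succeeds.

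Separately, your instinct that $msg.switchtoslave$ is a delicate point is reasonable --- the paper's proof does not address it at all --- but the blanket state restriction you assert cannot be the way to discharge it. One would instead need a causal argument that no \leader (the only source of $msg.switchtoslave$ at this stage) can become adjacent to $p$'s end of the line while $p$ is still in {\em Init}; that is a property of the later sub-algorithms, not something that follows from claiming every particle stays in {\em Init} or {\em \inited}.
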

\begin{proof}
If this particle has a non-faulty neighbour on the line, then such neighbour will not move until $p$ is in the {\em Init} state.
When $p$ is activated, it will verify that it has only one neighbour in state {\sf \inited}, and it will thus become a \marker.
The same occurs if the neighbour on the line is \faulty. 
\end{proof}

\begin{lemma}\label{lemma:numberexplorer}
There exists a round $r$ in which there is either: (1) a \marker and at least two {\em pre.}\explorer{\em s}, or (2) at least three {\em pre.}\explorer{\em s}.
\end{lemma}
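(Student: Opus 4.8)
The plan is to reduce the statement to a counting problem and resolve it by a case analysis on $M$, the number of endpoints of the initial line $L_0$ that carry a correct particle. By Observation~\ref{observation:makermarker} each such endpoint becomes a \marker, and a \marker keeps its state throughout {\sf Explorer Creation}, so at least $M$ \markers are present at every round of this sub-algorithm. Hence it suffices to show that the set of particles that ever enter state {\em pre.}\explorer has size at least $3$ when $M=0$ and at least $2$ when $M\geq 1$ (recall $f\geq 1$, since {\sf Explorer Creation} starts only when a \faulty particle exists): the former is conclusion~(2), and the latter together with a persistent \marker is conclusion~(1).

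I would then group the correct particles into maximal runs separated by \faulty particles, calling a run \emph{anchored} if it contains an endpoint of $L_0$ and \emph{interior} otherwise, and lean on two facts. (F1) Every interior run produces at least one {\em pre.}\explorer, and at least two if its size is $\geq 3$ (the latter directly by Lemma~\ref{lemma:explorercreation}). (F2) Every anchored run of size $\geq 2$ produces at least one {\em pre.}\explorer from its \faulty-side border (an anchored run of size $1$ is just a lone \marker). With these, the cases $M=1$ and $M=2$ are routine: a short split on the number of runs always produces two {\em pre.}\explorer{\em s}, either from two distinct runs each contributing one, or from a single run of size $\geq 3$ contributing two by Lemma~\ref{lemma:explorercreation}, the bound $n-f\geq 5$ ruling out the small configurations in which neither happens; with a persistent \marker this gives conclusion~(1). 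For $M=0$ every run is interior: if there are at least three runs, each contributes one by (F1); if there are exactly two, then $n-f\geq 5$ forces one of them to have size $\geq 3$, so Lemma~\ref{lemma:explorercreation} gives two there and (F1) gives one on the other — in every case at least three, which is conclusion~(2).

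The genuinely hard case is $M=0$ with a \emph{single} interior run $f,\,p_1,\dots,p_m,\,f$, where $m=n-f\geq 5$: here Lemma~\ref{lemma:explorercreation} only promises the two border {\em pre.}\explorer{\em s} $p_1,p_m$, so I must extract a third, and this is exactly where the \emph{notified} upgrade of Line~\ref{notified:if1} is essential. Since $m\geq 5$, neither $p_2$ (the inner neighbour of the border $p_1$) nor $p_3$ is adjacent to a \faulty particle; since a $msg.notify$ is sent to a neighbour only by a \faulty-adjacent particle firing Line~\ref{activated:ifpreexp}, and a particle that upgrades through the \emph{notified} state emits none, $p_3$ never receives a $msg.notify$ and therefore stays in state \inited forever. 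When $p_1$ becomes a {\em pre.}\explorer it sends $msg.notify$ to $p_2$, turning $p_2$ \emph{notified}; the frozen status of $p_3$ makes the three guards $cond_1,cond_2,cond_3$ of Line~\ref{notified:if1} hold at $p_2$'s next activation, so $p_2$ upgrades to a {\em pre.}\explorer (and symmetrically $p_{m-1}$). This yields at least three {\em pre.}\explorer{\em s}, closing conclusion~(2).

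I expect this single-run analysis to be the main obstacle, since it alone reasons beyond Lemma~\ref{lemma:explorercreation} and requires a scheduler-independent verification that the upgrade cannot be sabotaged — concretely, that $p_3$ never receives a $msg.notify$ and that the border $p_1$ never receives a $msg.notified$, so that the relevant guards stay true under every activation order. The same care underlies (F1) and (F2): a \faulty-side border is blocked only by a $msg.notified$, which a neighbour emits only \emph{away} from the \faulty side that triggered it, so a border can fail to become a {\em pre.}\explorer only when a second \faulty-adjacent particle lies two steps inside it — a situation impossible for anchored runs and for interior runs of size $\leq 2$, and otherwise absorbed by Lemma~\ref{lemma:explorercreation}. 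A last, bookkeeping point is temporal: a {\em pre.}\explorer becomes an \explorer at its next activation, so to read the statement as simultaneous presence at one round I would use that \markers persist and that the quantity counted is the number of particles that ever enter state {\em pre.}\explorer, taking $r$ to be the first round by which all the counted transitions have fired.
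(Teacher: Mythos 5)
Your proof is correct and follows essentially the same route as the paper's: a case analysis on the maximal runs of correct particles, using Observation~\ref{observation:makermarker} to obtain the \marker{s} and Lemma~\ref{lemma:explorercreation} to extract two {\em pre.}\explorer{s} from any run of size at least three, with the bound $n-f\geq 5$ closing the small cases. Your write-up is somewhat more systematic (the explicit split on $M$ and the auxiliary facts (F1)/(F2)), and in the single-interior-run case you actually supply the \emph{notified}-upgrade argument for $p_2$ and $p_{m-1}$ that the paper only asserts when it claims four {\em pre.}\explorer{s} there; your closing remark about reading the statement as counting particles that ever enter the {\em pre.}\explorer state (rather than simultaneous presence) is a reasonable reading that the paper's own proof implicitly adopts as well.
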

\begin{proof}
Let us first examine the case in which, at round $r=0$ we have a run of at least $3$ correct particles near the end of the line. By Lemma~\ref{lemma:explorercreation} and Observation~\ref{observation:makermarker}, the lemma follows.

Let us now examine the case in which the run has at least $2$ correct particles near the end of the line. By Observation~\ref{observation:makermarker}, there is at least one \marker, and the neighbour of the \marker will become a {\em pre.}\explorer.

By hypothesis, $n-f \geq 5$; therefore, if there is a run of $3$ correct particles, by Lemma~\ref{lemma:explorercreation}, at least two {\em pre.}\explorer{\em s} will be created, and the lemma holds. Otherwise, notice that if the run has only $2$ correct particles, again at least one {\em pre.}\explorer will be created, and the lemma holds as well. 
The lemma holds also if there are two different runs, each of size one. 

Finally, let us examine the case in which, at round $r=0$, there is at least $1$ correct particle near the end of the line. By Observation~\ref{observation:makermarker}, there is at least one \marker. 
The remaining correct particles form either a run of $3$  particles; or a run of two correct particles near the end of the line and another run of at least one correct particle inside the line; or two runs of size less than $3$ inside the line. In all these cases the lemma holds. 

The last case to consider is when there is no correct particle near the end of the line at round $r=0$: in this case, if the correct particles are all in one run, we have $4$ {\em pre.}\explorer{\em s}. Otherwise, if there is a run of size $3$, then we still have at least $3$ {\em pre.}\explorer{\em s}: two created in the run of size $3$, and the third one among the remaining correct particles. 
If there are two runs of size $2$, they will both create $2$ {\em pre.}\explorer{\em s}. The lemma still holds if there are five runs of size one, or a run of size two and three runs of size one.
\end{proof}

\subriteDescription{Candidate Creation}  
\begin{figure}
\begin{framed}
\footnotesize
\begin{algorithmic}[1]
\Upon[{\em pre.}\marker]	\label{explorer:premarker}

	\If{$myself.expanded$}
	\State $contracToHead()$
	\State Set State \marker
	\Else
	\State Set State \explorer
	\EndIf
	\color{black} \label{explorer:premarker1}
\End
\\
\Upon[\explorer]
		 \If{$\exists port \in Ports | msg.switchtoslave  \in port$}
    	\State Set State \slave
    	\State End Cycle
    \ElsIf{ $ myself \in L_{0}$}
	   	\State $l=getNeighbourOutside(L_0)$
		\State $direction=left$
		\State $move(l)$
	   \ElsIf{$\exists p \in L_0 | p={\sf Init}$}
	\State End Cycle
	  \ElsIf{$\exists l_1,l_2,l_3 \in L_0 | empty(l_{1,2,3})=true \land l_1,l_2,l_3$ are contigous} \label{explorer:marker}
	  \State pre.\marker
	  \State $expand(l_3)$\label{explorer:marker2}
	    \ElsIf{$\exists p \in getLocation(direction) \land p=\text{\explorer} \land pointingAtMe(p) \land \nexists mgs.changedirection \in p$} \label{explorer:cd1}
	    \State $send(p,mgs.changedirection)$
	    \State $direction=opposite(direction)$
	    \ElsIf{$\exists p \in Ports | mgs.changedirection \in p \land direction=p$} \label{explorer:cd2}
	    \State $direction=opposite(direction)$
         \ElsIf{$\exists p \in L_0 \land p=\text{\marker} $} \label{explorer:stopcandidate}
	    \State $send(p,msg.asktobecandidate)$
	    \State Set State \slave
	    	\ElsIf {$\exists! p \in Port | msg.switchtocandidate \in p$}	 \label{explorer:stopc1}	
	\State Set direction and flags from $msg.switchtocandidate$
	\State Set State {\leader}
	    \ElsIf {$\exists p\in Port | msg.switchtocandidate \in p$} \label{explorer:tochosen}    
	    	\State Set direction and flags from $msg.switchtocandidate$
	    \State Set State {\chosen} \label{leader:case3}
	    	\ElsIf  {$\exists p \in Ports | msg.switchtoleader \in p$} 
		 	\State Set direction and flags from $msg.switchtoleader$
		\State Set State {\chosen} 
	    			\ElsIf  {$(\exists! p \in Ports | msg.switchopposer \in p ) $}
		\State Set direction and flags from $msg.switchopposer$
	\State Set State {\opposer} 
				\ElsIf  {$(\exists p \in Ports | msg.switchopposer \in p ) $}
		\State Set direction and flags from $msg.switchopposer$
\State Set State {\chosen} 
         \ElsIf{$\exists p \in L_0 \land (p=\text{\opposer} \lor p=\text{\chosen}) $} 
	    \State Set State \slave 
	    \ElsIf{$empty(getLocation(direction))$}
	    \State $move(getLocation(direction))$
	  \EndIf
\color{black}
\End
\end{algorithmic}
\end{framed}
\caption{{Explorer Algorithm} \label{algorithm:explorer}}
\end{figure}
This sub-algorithm, reported in Figure~\ref{algorithm:explorer}, is executed  when $f >0$ at round $r=0$. Its main purpose is to elect at most two \explorers; one of the two elected \explorers becomes
a \leader. 
In this sub-algorithm the \explorer{\em s} move along the line until they find either the end of the line,  which 
is detected by seeing three consecutive empty locations, or  a \marker. 
If an \explorer meets a particle in the {\em Init} state, it waits. 

The first \explorer that reaches an extreme of the line without \marker, becomes a \marker and stays there.
If  two \explorers try to become \marker on the same end of the line at the same time, only one will succeed (they will both try to move in the same location, see Lines~\ref{explorer:marker}-\ref{explorer:marker2}  and \ref{explorer:premarker}-\ref{explorer:premarker1} of Figure~\ref{algorithm:explorer}).
An \explorer communicates its direction of movement to other \explorer{s} by writing an appropriate flag in the shared memory of the port where the head is going to expand.
In the following, we will say that the \explorer is pointing in some direction. 

If two \explorers meet and they have opposing directions, since they cannot pass through each other, they simply switch directions.
When an \explorer switches direction, it sends a message to the other explorer, to ensure that it will also switch direction. This message is also used to ensure that a particle does not switch direction twice with the same particle; that is, the \explorer checks that the other is pointing at it, and that it has not a pending $msg.changedirection$ in the shared memory of the corresponding port (see Lines~\ref{explorer:cd1},\ref{explorer:cd2} of Figure~\ref{algorithm:explorer}).
If an \explorer finds its next location occupied, it waits. Depending on the initial configuration, either one, two, or no \markers are created during this procedure. 

An  \explorer who reaches a \marker, sends a message to the \marker asking to become a \leader; if the \marker accepts, then the \explorer becomes a \leader. If two  \explorers reach the same  \marker and both ask to become a candidate, then the  \marker  will answer affirmatively to only one of them (see Lines~\ref{explorer:stopcandidate},\ref{explorer:stopc1} of Figure~\ref{algorithm:explorer}, and  Line~\ref{marker:if1}  of Figure~\ref{algorithm:activated}).  

\begin{lemma} \label{lemma:markerexplorer}
Starting from any initial configuration where $f>0$, there exists a round $r$ in which there is at least one \marker $p$ that signals the end of the line, and
an \explorer $p'$ moving towards $p$. 
\end{lemma}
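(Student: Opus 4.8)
The plan is to track the collective motion of the \explorers as a system of tokens bouncing in a bounded one-dimensional region whose two endpoints are the extremes of $L_0$. First I would invoke Lemma~\ref{lemma:numberexplorer} to fix a round at which we are in one of two situations: either (Case~1) there is already a \marker together with at least two \explorers, or (Case~2) there are at least three \explorers and, possibly, no \marker yet. In either case the \explorers move along $L_1$ and $L_{-1}$, and by the Explorer Algorithm they can change state in only three ways: an \explorer reaching an \emph{unmarked} extreme becomes a \marker (Line~\ref{explorer:marker}), an \explorer reaching an existing \marker asks to become a \leader, and two \explorers with opposing directions switch directions. Crucially, Lemma~\ref{lemma:explorercreation} guarantees that no interior gap of size $3$ is ever created, so the three-consecutive-empty-cells test used to detect an extreme is reliable: every \marker produced this way genuinely sits at an end of $L_0$, and there are at most two \markers overall, one per extreme.

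The core of the argument is a counting observation combined with a pass-through relabeling. Since opposing \explorers only swap directions, the occupied positions evolve exactly as if the tokens passed through one another; under this relabeling each \emph{virtual} \explorer keeps a fixed direction and therefore travels monotonically toward one extreme. I would then argue that at most two virtual \explorers are ever consumed in marking the two extremes, since the first to reach each unmarked extreme converts it into a \marker. In Case~2 we start with at least three \explorers and no \marker, so after both extremes have been marked at least one \explorer remains in motion; the next time this \explorer reaches an extreme, that extreme is already marked, and just before arrival it is an \explorer heading toward a \marker, which is exactly the configuration claimed. In Case~1 the already-existing \marker means only one further \explorer is needed to mark the opposite extreme, again leaving at least one \explorer that must reach a marked end. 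A short direction-based sub-case analysis finishes this: if some \explorer ever heads toward an already-marked extreme it reaches it directly, and otherwise the second \explorer travelling toward a freshly marked extreme supplies the required pair $(p,p')$.

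The remaining ingredient is to show that, under the fair semi-synchronous scheduler, every \explorer keeps making progress toward an extreme, so that the arrivals above occur in finite time. Here I would verify that no configuration of \explorers stalls forever: two \explorers meeting head-on resolve through the \emph{changedirection} handshake (Lines~\ref{explorer:cd1}--\ref{explorer:cd2}) rather than by blocking; two \explorers moving in the same direction do not deadlock, because fairness eventually activates the leading one, which vacates the cell needed by the follower; and an \explorer blocked by a stationary \marker simply asks to become a \leader. Combined with fairness and the conflict-resolution rules~(i)--(iii), this ensures that a virtual \explorer heading toward a given extreme reaches it after finitely many rounds.

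The hard part, and the step I expect to require the most care, is making the pass-through relabeling rigorous in the semi-synchronous setting, where movements are not atomic and may fail because of the conflict rules~(i)--(iii). I must check that direction swaps are always well defined pairwise, so that the \emph{changedirection} messages are neither lost nor double-counted; that the relabeling stays consistent across rounds in which several \explorers move or fail to move simultaneously; and that no livelock arises in which \explorers repeatedly swap directions without any of them reaching an extreme. Establishing this no-livelock property, essentially that the total distance still to be travelled by the virtual \explorers strictly decreases under fair activation, is the technical crux on which the finiteness of the round $r$ rests.
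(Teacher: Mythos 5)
Your proof is correct and rests on the same two pillars as the paper's: Lemma~\ref{lemma:numberexplorer} to supply enough \explorers (possibly together with a \marker), and a pigeonhole on the two travel directions to conclude that some \explorer must end up heading toward an extreme that is already, or is about to be, marked. The organization differs, however. The paper argues by cases on which endpoints of $L_0$ are correct (both, one, or neither), uses Observation~\ref{observation:makermarker} to turn correct endpoints into \markers, and then asserts informally that among the remaining \explorers two share a direction, the first of which marks the faulty extreme while the second follows it. You instead start directly from the dichotomy of Lemma~\ref{lemma:numberexplorer} and recast the direction-swapping \explorers as pass-through tokens with fixed headings, then count: at most two tokens are consumed in marking the at most two unmarked extremes, so a surviving token must be an \explorer heading toward a \marker. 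This buys you two things the paper leaves implicit: a justification that same-direction or head-on \explorers cannot livelock through repeated swaps or perpetual blocking, and an explicit fairness argument that the claimed round is reached in finite time --- precisely the monotone-progress property the paper's phrase ``the first \explorer will become \marker and the other will move towards it'' silently assumes. Two minor points to tighten: the pass-through relabeling must be applied per line ($L_1$ and $L_{-1}$ separately), since \explorers on different sides never interact, though the counting over extremes still goes through; and an \explorer can also be demoted to \slave by a \leader, but since \leaderplu only come into existence after some \explorer has already reached a \marker, this cannot affect the first round at which the claimed configuration appears.
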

\begin{proof}
Let us first examine the case in which at round $r=0$ both endpoints, $p_1$ and $p_2$ are in $C$. In this case, it is easy to see that they will both be activated resulting 
in having two \marker{s} at the end of the line. Moreover, by Lemma~\ref{lemma:numberexplorer}, we have at least three pre.\explorers; therefore, there will be one \explorer moving towards one of the \marker.
 
In case at round $r=0$ only one of the endpoints, say $p_1$, is in $C$, it will be eventually activated becoming a \marker.
By Lemma~\ref{lemma:numberexplorer}, we have that at least 2 pre.\explorer will be created: if at least one of them has the direction of $p_1$, then the lemma follows.
Otherwise (they have the same direction), one of them will either move towards a \marker, or towards the end of the line with a \faulty particle, and there are three empty consecutive locations.
Thus, the first \explorer, will become \marker $p$ and the other will move towards $p$.
Finally, by Lemma~\ref{lemma:numberexplorer}, it follows that, even if both $p_1$ and $p_2$ are not in $C$, there exists a round in which there are at least three pre.\explorers: two of them will have the same directions; thus, the previous argument can be applied again, and the lemma follows.
\end{proof}

\subriteDescription{Candidate Checking}
\begin{figure}
\begin{framed}
\footnotesize
\begin{algorithmic}[1]	
\Upon[\slave]
	\State $cond_1 := \exists p \in Ports | msg.candidate \in p$
	\State $cond_2 := \nexists p' \in Port | msg.switchtocandidate \in p'$
	\If {$cond_1 \land cond_2$}
	\State $direction=right$
	\State Set State \leader
	\ElsIf  {$cond_1 \land !cond_2$}
	 	\State Set direction and flags from $msg.switchtocandidate$
	\State Set State {\chosen} \label{leader:case2}
		\ElsIf  {$(\exists p \in Ports | msg.switchcollector \in p $}
		 	\State Set direction and flags from $msg.switchcollector$
	\State Set State {\collector} 
			\ElsIf  {$\exists port \in Ports | msg.switchtoleader \in port$} 
		 	\State Set direction and flags from $msg.switchtoleader$
		\State Set State {\chosen} 
	\ElsIf {$\exists port' \in Port | msg.switchtocandidate \in port'$}
	
	\State Set State {\leader}
	\EndIf
\End
\end{algorithmic}
\end{framed}
\caption{Candidate and Slave Algorithm -- Part One \label{algorithm:candidate}}
\end{figure}
\begin{figure}
\begin{framed}
\footnotesize
\begin{algorithmic}[1]
\setcounter{ALG@line}{16}
\Upon[\leader]
	\ForAll{$p \in L_{0} | p \neq \text{\marker}$} \label{candidate:freeze}
	   \State $send(p,msg.switchtoslave)$
	   \EndFor
	   	   \State $cond_1 := \exists p \in getLocation(direction)$
	   \If{$\exists p \in L_0 \land (p=\text{\leader} \lor p=\text{\chosen} ) $}
	   \State Set State {\slave}
           \ElsIf{$cond_1 \land p \neq \text{\leader} \land contracted(p)$} \label{candidate:switch}
	    \State $send(p,msg.switchtocandidate)$
	    \State Set State {\slave}
	    \ElsIf{$contracted(myself) \land cond_1 \land p = \text{\leader}$} \label{leader:sameside1}
	    \State $l=$location not in $L_0$ that is neighbour to $myself$ and $p$.
	    \State $expand(l)$
 	    \ElsIf{$cond_1 \land p = \text{\leader} \land head(p) \notin L_{1,-1}$}   \label{leader:sameside2}
	    \State Set State {\slave}
	    \ElsIf{$head(myself) \notin L_{1,-1} \land !cond_1 \land p = \text{\leader}$}  \label{leader:sameside3}
	    \State $contractToTail()$
	    \State Set State {\chosen}   \label{leader:case5}
	    \Else
	    	\State $cond_2 := \exists p \in L_0 \land p=\text{\marker}$
		\If {$cond_2  \land p.flag.candidate$} \label{cadidate:opposing}
	     	\State $direction=oppositeDirection()$
 	     	\State Set State {\collector} \label{leader:collector}
	     \ElsIf{$cond_2 \land \neg p.flag.candidate \land \nexists msg.asktobecandidate \in p$}\label{leader:ask}
	     \State $send(p,msg.asktobecandidate)$
	     \ElsIf {$\exists p \in Port | msg.candidate \in p$} \label{leader:case1}
	      \State $send(p,msg.switchtoleader)$
	          \State $direction=p$
	       \State Set State {\recruiter}
	     \Else
	     	\State $cond_3 := \exists l_1,l_2,l_3 \in L_0 $ s.t. $empty(l_{1,2,3})=true$ 
		\State $cond_4 := l_1,l_2,l_3$ are contigous
	     	\If{$cond_3 \land cond_4 \land contracted(myself)$} \label{candidate:expand}
		   \State $expand(l_3)$
	    \ElsIf{$cond_3 \land cond_4 \land myhead \in l_3$} \label{leader:case4}
	      \State Set $flag.firstsideswitch$
	   	 \State $direction=getDirectionOppositeToTail()$
	    	\State $contractToHead()$
      	    	\State Set State {\chosen}
	    \ElsIf{$empty(getLocation(direction))$}
	    	\State $move(getLocation(direction))$
	\EndIf
	\EndIf
	  \EndIf
\End
\end{algorithmic}
\end{framed}
\caption{Candidate and Slave Algorithm -- Part Two \label{algorithm:candidateii}}
\end{figure}
The purpose of the {\sf Candidate Checking} sub-algorithm is to determinate whether   one or two \leaderplu have been created.  

A newly  elected \leader has to determine if it is the unique \leader; to do so, it switches 
direction trying to reach the other extremity of the line. While moving, it blocks every neighbour on line $L_{0}$, by sending
a $msg.switchtoslave$ message; this avoids that a new \explorer is created on the portion of the line that has been visited by the \leader (Line~\ref{candidate:freeze} of Figure~\ref{algorithm:candidateii}).

If the \leader  meets an \explorer coming from opposite direction, 
 it ``virtually'' continues its walk by switching roles with the explorer: the \explorer becomes  \leader and switches direction,
and  the old \leader  becomes a  \slave and stops. 
Similarly, if a \leader and a \slave meet, they switch roles (Line~\ref{candidate:switch} of Figure~\ref{algorithm:candidateii}).  

There are two possible outcomes of this procedure: either a unique \chosen is elected, or not.
An unique \chosen can be elected in all the following cases:
\begin{itemize}
 \item (C1) The \leader finds a \marker that has not elected a candidate, flag $flag.candidate$ is unset; also, it sends to \marker a message asking to be a candidate, and it receives the affermative answer from the \marker (Lines~\ref{candidate:expand},\ref{leader:ask},\ref{leader:case1} of Figure~\ref{algorithm:candidateii}). This also implies that  the \leader is unique and no other \leader can be created.
\item (C2) A \slave receives message $msg.candidate$ (from \marker  $p'$),  and message\\ $msg.switchtocandidate$ (see Line~\ref{leader:case2} of Figure~\ref{algorithm:candidate}). 
 \item (C3)  An \explorer receives on two distinct ports a\\ $msg.switchtocandidate$ request. This occurs when there are two \leaderplu on the same side, and both tried to switch their role with the same explorer (Line~\ref{leader:case3} of Figure~\ref{algorithm:explorer}). 
 \item (C4) A \leader reaches the other extremity, it finds three empty locations, and  it expands occupying the last empty location (Line~\ref{leader:case4} of Figure~\ref{algorithm:candidateii}). This implies that the \leader is unique and that no other \leader can be created.
 \item(C5)  A  \leader meets another \leader (Line~\ref{leader:sameside1} of Figure~\ref{algorithm:candidateii}). In this case, each leader knows the position of $L_{0}$, and it can identify the unique location $l$ that is neighbour to both and that is not on $L_0$. $l$ is empty. Both \leader try to expand to $l$; the one that succeed, waits until the other \leader becomes a \slave. When this happens it contracts and it becomes a \chosen (see Lines~\ref{leader:sameside1},\ref{leader:sameside2},\ref{leader:sameside3} of Figure~\ref{algorithm:candidateii}).
\end{itemize}

In all the above cases, the sub-algorithm {\sf Unique Leader} is executed. 
We cannot immediately elect a \chosen   when  the \leader reaches the other extremity,  finding  a \marker that has elected a candidate (Line~\ref{cadidate:opposing} of Figure~\ref{algorithm:candidateii}).
In this last case, the sub-algorithm {\sf Opposite Sides} is executed. It can still happen that, during the execution of {\sf Opposite Sides}, the symmetry between \leader{\em s} is somehow broken: in this case, an unique \chosen is elected, as explained in detail in the section describing {\sf Opposite Sides}.

\begin{lemma} \label{lemma:1candidate}
There exists a round $r$ in which there is at least one \leader.
\end{lemma}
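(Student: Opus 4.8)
The plan is to follow the handshake by which the algorithm promotes an \explorer to a \leader: an \explorer that becomes adjacent to a \marker sends \textsf{msg.asktobecandidate} and turns into a \slave (Line~\ref{explorer:stopcandidate} of Figure~\ref{algorithm:explorer}); the \marker, if its \textsf{flag.candidate} is unset, sets it and replies with \textsf{msg.candidate} (Line~\ref{marker:if1} of Figure~\ref{algorithm:activated}); the \slave then reads \textsf{msg.candidate} and, absent a competing \textsf{msg.switchtocandidate}, sets its state to \leader (the first branch of the \slave algorithm in Figure~\ref{algorithm:candidate}). So it suffices to show this chain is triggered and completes. I argue in the regime $f>0$, in which \textsf{Candidate Creation} runs and \leaderplu are created; when $f=0$ the line is recovered directly by \textsf{Fault Checking} (Theorem~\ref{thf:0}). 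By Lemma~\ref{lemma:markerexplorer} I fix a round $r_0$ at which a \marker $p$ signals the end of the line and an \explorer is moving toward $p$.

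First I would show that $p$ stays a \marker with \textsf{flag.candidate} unset until it receives an \textsf{msg.asktobecandidate}. Inspecting the \marker transitions in Figure~\ref{algorithm:activated}, a \marker can only leave its state through \textsf{msg.markertoleader} (becoming an \exodusleader), through \textsf{msg.switchtoleader} (becoming a \chosen), or through the \collector/\probe interactions. The first requires the two \textsf{Fault Checking} segments to become adjacent, which cannot happen while a \faulty particle blocks the \textsf{coin} messages; the others are generated only after a \leader already exists. Hence, before the first \leader appears, the only enabled transition of $p$ is to set \textsf{flag.candidate} on receiving \textsf{msg.asktobecandidate}.

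The main step, and the hard part, is to show that some \explorer eventually becomes a neighbour of $p$. Explorers travel on $L_1$ and $L_{-1}$ and reverse direction whenever two of them meet head-on (Lines~\ref{explorer:cd1}--\ref{explorer:cd2} of Figure~\ref{algorithm:explorer}). I would analyse this with the standard billiard (``pass-through'') reinterpretation: a head-on reversal is equivalent to two \explorers crossing and swapping identities, so the set of position-trajectories evolves as if explorers passed through one another, and the trajectory heading toward $p$ keeps advancing. By Lemma~\ref{lemma:explorercreation} no gap of size $3$ is ever created, so no spurious triple of empty cells appears in the interior and the advancing trajectory is never intercepted by a false end-detection (Line~\ref{explorer:marker}); moreover, since $p$ occupies its cell, the \explorer reaching that end sees the \marker (Line~\ref{explorer:stopcandidate}) rather than three empty cells. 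Fairness guarantees the frontmost \explorer is activated infinitely often and is not blocked ahead, so in finite time some \explorer becomes adjacent to a \marker while that \marker is still in state \marker (if it meets another already-formed \marker before $p$, the same handshake applies there, which is equally good).

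To conclude, once an \explorer is adjacent to such a \marker it executes Line~\ref{explorer:stopcandidate}, sending \textsf{msg.asktobecandidate} and becoming a \slave; at its next activation the \marker sets \textsf{flag.candidate} and replies with \textsf{msg.candidate}; and when that \slave is next activated, $cond_1$ (presence of \textsf{msg.candidate}) holds while $cond_2$ (absence of \textsf{msg.switchtocandidate}) still holds, since no \leader yet exists to request a swap, so it sets its state to \leader. This produces a \leader, establishing the lemma. The only side-condition to verify in a full write-up is that, before this first handshake completes, no \explorer is diverted into a \slave by a stray \textsf{msg.switchtoslave}; but that message is emitted solely by a \leader in the freeze loop at Line~\ref{candidate:freeze} of Figure~\ref{algorithm:candidateii}, which does not yet exist.
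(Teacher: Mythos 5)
Your proposal is correct and follows essentially the same route as the paper's proof: invoke Lemma~\ref{lemma:markerexplorer} to obtain a \marker with an \explorer heading toward it, assume no \leader exists yet (so nothing can turn the \explorer into a \slave), argue that direction-swaps between \explorers preserve progress toward the \marker, and finish with the $msg.asktobecandidate$/$msg.candidate$ handshake. Your ``billiard'' reinterpretation of head-on reversals is just a repackaging of the paper's iterated $msg.changedirection$ argument, and your explicit checks that the \marker cannot leave its state and that no stray $msg.switchtoslave$ exists are sound elaborations of points the paper states more tersely.
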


\begin{proof}

By Lemma~\ref{lemma:markerexplorer}, there exists a round when there is at least one \marker $p$, and an \explorer $p'$ moving towards the \marker.
Let this \explorer be in $L_1$.
Notice, that the \explorer can only be stopped by a \leader, turning it into a \slave.  However, if there exists a \leader, the lemma follows. 

Therefore, let us assume that no \leader exists. Notice that, if $p'$ is blocked by another \explorer $p''$ that is pointing at him, then $p''$ receives a message $mgs.changedirection$. 
This message forces $p''$ to eventually get the same direction of $p'$, thus pointing at $p$.  If there is a another
\explorer blocking $p''$, we can iterate the same argument, until there is no one on $L_1$ between the last \explorer $p'$ and a neighbour location of $p$ on $L_1$. Thus, after a finite number of activations $p'$ will be a neighbour of $p$.
When this occurs, either $p'$ becomes a \leader by receiving $msg.candidate$ from $p$; or there exists another \explorer on $L_{-1}$ that asked to become a \leader, it received  $msg.candidate$, thus becoming a \leader. In all cases, the lemma follows.
\end{proof}

\subriteDescription{Unique Leader}
The sub-algorithm {\sf Unique Leader}, reported in Figure~\ref{algorithm:leader}, is executed when an unique leader is elected: it main goal is to let the \chosen   collect every particle and eventually form  a single  line.

Let us assume that the \chosen is on line $L_1$: it moves until it sees a \marker. During its movements, when it finds  some particle $p$ on $L_1$ (either an \explorer or a \slave), it 
forces $p$ to become a \chosen, and it changes its own state to \recruiter. By doing so, a line of moving particles is created: this group moves compact using handovers
starting from the \chosen until the last \recruiter of the line.
In particular, the movements are carried out as follows.

{\bf Virtual Movement.} When a \chosen particle $p$ meets a particle $p'$ that is a \slave (or an \explorer), it forces $p'$ to become \chosen and   it changes its own state to non-\chosen. We will refer to this protocol as a {\it move}, and we will say that the \chosen moves in the position of particle $p'$ (even if there has not been any actual movement of $p'$, but just an exchange of roles). 

{\bf Collection of Particles:} When a particle $p$ starts building a line of moving (and correct) particles, we will say that it is {\sf collecting} the particles. In particular, if $p$ is only interested in collecting particles on its direction of movement, then the collection of a particle $p'$ by $p$ is done by moving virtually to $p'$, while setting the state of $p$ to \recruiter. The effect of this is that $p$ will follow $p'$; also, all other \recruiter (behind $p$) will follow using the handovers. If $p$ wants to collect a particle $p'$ on $L_0$,  then it will send to $p'$ a message $msg.switchtofollower$, and it will wait until $p'$ switches to \recruiter state. After the switch, $p$ keeps moving in its direction; also, $p'$ will join the line by an handover by either $p$ or by some other \recruiter already in the line that is being built by $p$.
 
When a \chosen is a neighbor of a \marker for the first time, the \marker becomes the \chosen, and the \chosen becomes a \recruiter: at this point, the new appointed \chosen is on $L_0$. Now, the \chosen starts moving on $L_{-1}$: during its movements, it makes particles on both $L_0$ (clearly only the correct ones) and on $L_{-1}$ to follow it.
In detail, when the  \chosen, while moving on $L_{-1}$, becomes neighbour of a particle $p$ on $L_{0}$, it sends to this particle a message $msg.switchtofollower$; the \chosen does not move until $p$ becomes \recruiter.
When $p$ becomes \recruiter on $L_0$, it will join the line of the \chosen: in particular, when a tail of an expanded \recruiter or \chosen that is neighbour of $p$ contracts, it does an handover with $p$ forcing it to join the line. 

When the \chosen reaches the other \marker, it switches again side, from $L_{-1}$ to line $L_{1}$, and it keeps ``collecting'' the other correct particles following the same strategy. 
An example run with an Unique Leader is in Figure \ref{xf:shape2}.
\begin{figure*}[tbh]
\begin{center}
  \subfloat[Three \explorers start moving.]{
  \includegraphics[scale=0.45]{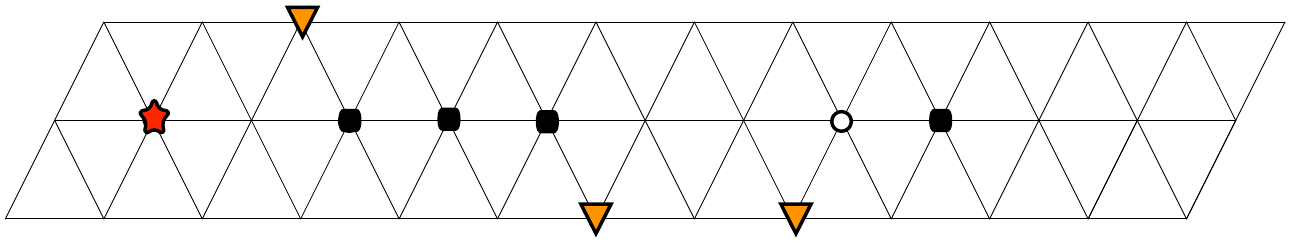}
  }
    \qquad\quad
  \subfloat[All \explorers are moving towards the same \marker, the red star: an unique \chosen will be elected. ]{
  \includegraphics[scale=0.45]{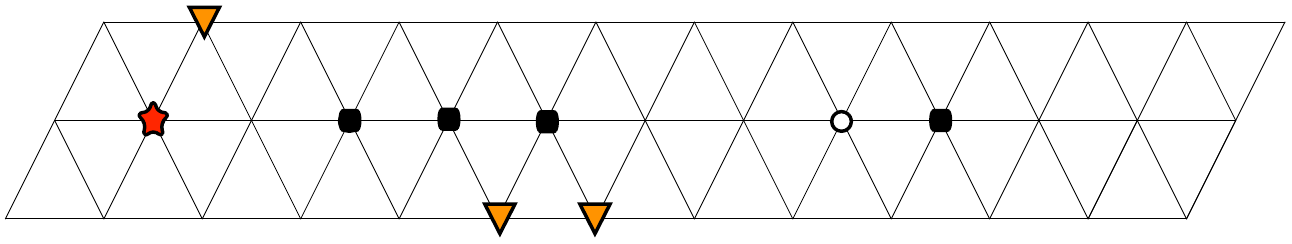}
  }
   \\
  \subfloat[Two explorers ask to become \leader at the same time.]{
  \includegraphics[scale=0.45]{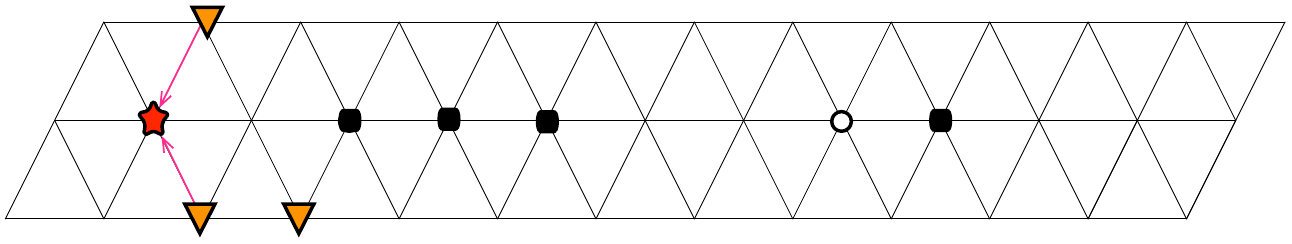}
  }
    \qquad\quad
  \subfloat[A \leader is created, the grey particle with the white border.]{
  \includegraphics[scale=0.45]{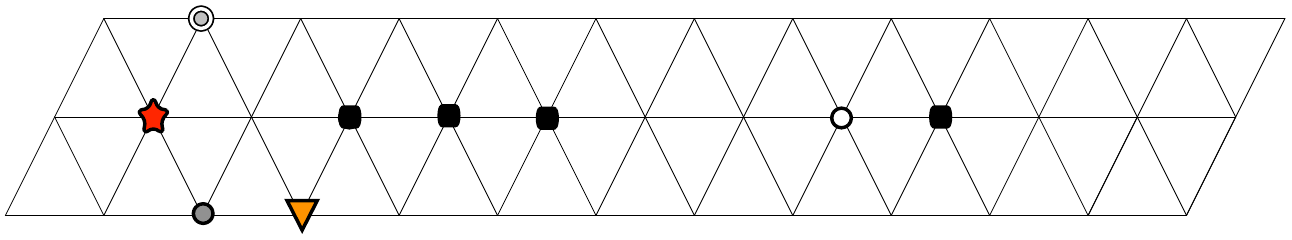}
  }
     \\
  \subfloat[The \leader reaches the end of the line, it detects see by seeing three consecutive locations on $L_0$.]{
  \includegraphics[scale=0.45]{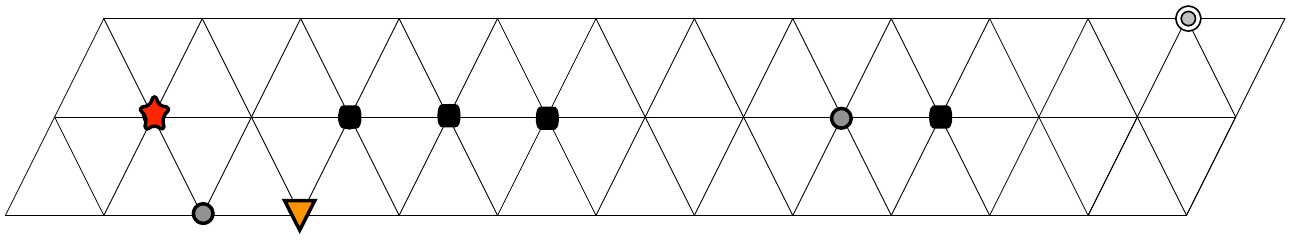}
  }
    \qquad\quad
  \subfloat[The \leader moves on $L_0$ and it becomes an Unique \chosen, this is the case (C4) of the Unique Leader procedure.]{
  \includegraphics[scale=0.45]{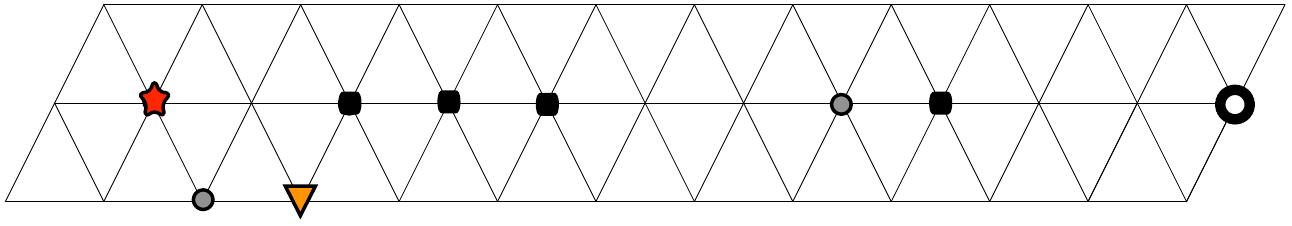}
  }
   \\
  \subfloat[The \chosen moves on $L_{-1}$ collecting the particles it meets, the \recruiter are the grey rectangles.]{
  \includegraphics[scale=0.45]{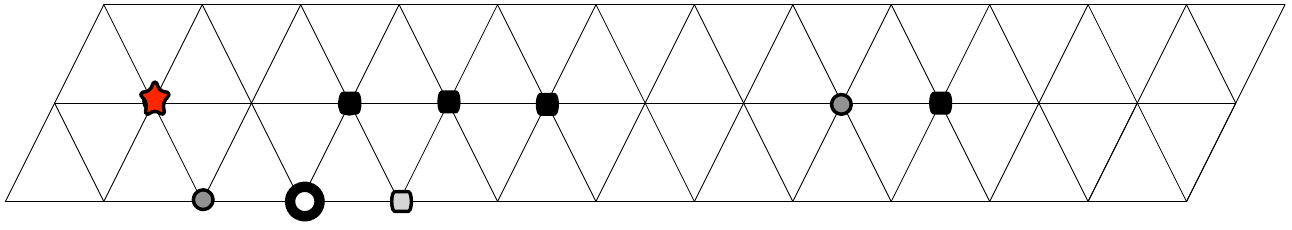}
  }
    \qquad\quad
  \subfloat[The \chosen moves on $L_{1}$ and it starts collecting particles on $L_0,L_1$.  ]{
  \includegraphics[scale=0.45]{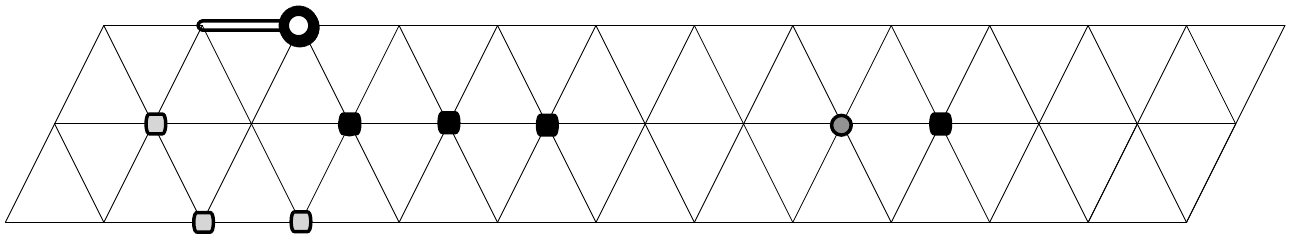}
  }
      \\
  \subfloat[The \chosen starts an handover with a \recruiter on $L_0$.]{
  \includegraphics[scale=0.45]{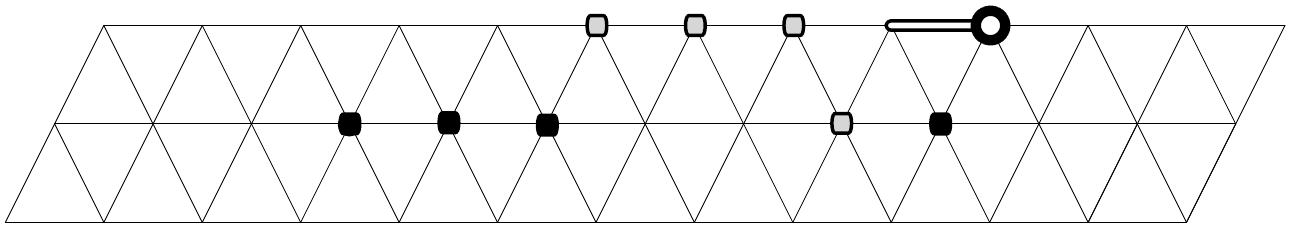}
  }
    \qquad\quad
  \subfloat[All correct particles have been collected by the \chosen.  ]{
  \includegraphics[scale=0.45]{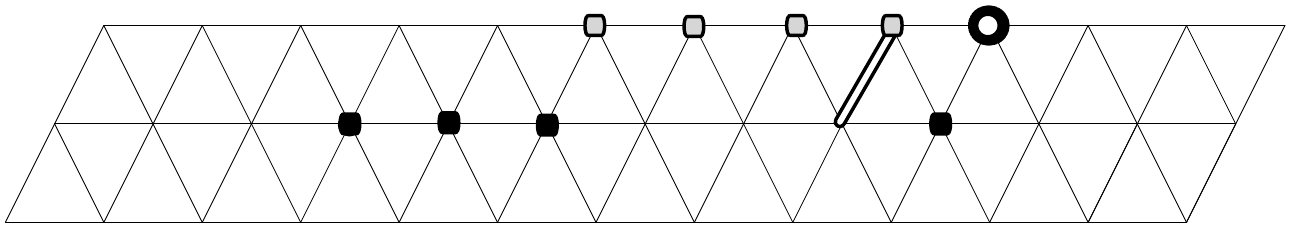}
  }

\end{center}
\caption{Unique Leader, example run.}
\label{xf:shape2}
\end{figure*}

\begin{figure}
\begin{framed}
\footnotesize
\begin{algorithmic}[1]

\Upon[\recruiter]

\If{$expanded(myself)$}
	   \If{$\exists p \in tailNeighbour \land p = \text{\recruiter}$}
	    \State $contractToHeadAndHandover(p)$
	   \Else
	   \State $contractToHead()$ 
	   \EndIf
     	    		\ElsIf  {$\exists port \in Ports | msg.switchtoleader \in port$} 
		 	\State Set direction and flags from $msg.switchtoleader$
		\State Set State {\chosen} 
	\Else
		\State $cond_1 := \exists p \in Port | msg.probe \in p \land contracted(myself)$
		\If {$cond_1 \land contracted(opposite(p))$}
         	\State $send((opposite(p),msg.probe)$
             \ElsIf{$cond_1 \land empty(opposite(p))$} \label{setprobe}
             \State $direction=opposite(p)$
             \State Set State \probe
     \EndIf
     \EndIf
\End

\Upon[\chosen]

           \If{$myself \in L_{j \in \{-1,1\}}  \land getLocation(direction) \notin L_{j}  $ }
           \State $direction=setDirectionToOtherLineExtreme()$
           	    \ElsIf{$contracted(myself) \land flag.firstsideswitch \land \exists p \in L_0 |  p  \neq  \text{\recruiter}$}
	   \State $send(p,msg.switchtofollower)$
           \ElsIf{$\exists p \in getLocation(direction)$} \label{leader:switch}
	    \State $send(p,msg.switchtoleader)$
	    \State Set State {\recruiter}
	    \ElsIf{$\exists p \in L_0 \land p=\text{\marker}$} \label{leader:changeside}
	    \State Set $flag.firstsideswitch$
	    \State $direction=p$
	    \State $send(p,msg.switchtoleader)$
            \State Set State {\recruiter}

            \ElsIf{$expanded(myself)$}
	   \If{$\exists p \in tailNeighbour \land p = \text{\recruiter}$}
	    \State $contractToHeadAndHandover(p)$
	   \Else
	   \State $contractToHead()$ 
	   \EndIf
	    	    \ElsIf{$contracted(myself) \land empty(getLocation(direction))$}
	    \State $expand(getLocation(direction))$
	  \EndIf
\color{black}

\End

\end{algorithmic}
\end{framed}
\caption{Unique Leader\label{algorithm:leader}}
\end{figure}

\begin{lemma}
At any round $r$, there is at most one particle $p$ in state \chosen. \label{lemma:chosen}
\end{lemma}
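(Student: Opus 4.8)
The plan is to prove Lemma~\ref{lemma:chosen} by induction on the rounds, showing that the number of particles in state \chosen never exceeds one, and that it starts at zero. The base case is round $r=0$, where every particle is contracted and in the {\em Init} state, so there are zero \chosen particles and the invariant holds trivially. For the inductive step, I would assume that at the start of round $r$ there is at most one \chosen, and examine every transition in the pseudo-code that can either create a new \chosen or preserve/destroy an existing one, arguing that no round can push the count above one.

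**First I would** catalogue exhaustively all the state transitions that set a particle's state to \chosen. Scanning the figures, these are: in {\sf Fault Checking} the \marker transition via $msg.switchtoleader$; in the {\sf Explorer} algorithm (Figure~\ref{algorithm:explorer}) the transitions via $msg.switchtocandidate$ (Line~\ref{leader:case3}), via $msg.switchtoleader$, and via $msg.switchopposer$ when two such messages are not present; in the {\sf Slave}/{\sf Candidate} algorithm (Figures~\ref{algorithm:candidate}, \ref{algorithm:candidateii}) the transitions via $msg.switchtocandidate$ (Line~\ref{leader:case2}), via $msg.switchtoleader$, and the \leader becoming \chosen at Lines~\ref{leader:case5} and~\ref{leader:case4}; and in {\sf Unique Leader} (Figure~\ref{algorithm:leader}) the \recruiter transition via $msg.switchtoleader$. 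The crucial observation to extract is that \emph{every one of these transitions is triggered by the receipt of a distinguished message} ($msg.switchtoleader$, $msg.switchtocandidate$, or $msg.switchopposer$), except for the two \leader-to-\chosen transitions at Lines~\ref{leader:case5} and~\ref{leader:case4} which correspond precisely to the unique-leader outcomes (C5) and (C4). So the proof reduces to a token-counting argument: I would show that a \chosen is created only by consuming such a message, and that such messages are generated only by a \chosen (or the unique \leader that is about to become the sole \chosen) in a way that simultaneously demotes the sender out of the \chosen state.

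**The key structural claim** is that the \chosen state is handed off conservatively. In the {\sf Unique Leader} routine, whenever a \chosen emits $msg.switchtoleader$ (Lines~\ref{leader:switch}, \ref{leader:changeside} of Figure~\ref{algorithm:leader}) it \emph{immediately} sets its own state to \recruiter in the same Compute step; so the recipient that later becomes \chosen exactly replaces the sender, conserving the count. I would verify the analogous bookkeeping for each message-emitting line in {\sf Candidate Checking}, relying on the conflict-resolution rules~(i)--(iii) of the model to guarantee that even when several particles are activated in the same round, a $msg.switchtoleader$/$msg.switchtocandidate$ sent by a \chosen is acted upon by at most one recipient, and that a message in a shared port is deleted upon being consumed (the convention stated after the {\sf Fault Checking} description). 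The cases (C4) and (C5) must be handled separately: each is reached only along the \leader code path and each produces the \emph{first ever} \chosen in an execution, so I must argue that at the round a \leader becomes \chosen via (C4) or (C5) no other \chosen yet exists --- this follows because, prior to the first \chosen, the only way to enter \chosen is (C1)--(C5), and (C1),(C2),(C3) route through the unique-leader machinery while (C4),(C5) are mutually exclusive configurations (three empty cells ahead versus meeting another \leader) that, by the uniqueness of the \leader established in the {\sf Candidate Checking} analysis, cannot both fire.

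**The hard part will be** the semi-synchronous concurrency: because the adversary may activate an arbitrary subset of particles in one round, I cannot treat transitions as happening one at a time, and I must rule out the scenario where two distinct particles each read a $msg.switchtoleader$-type message and both become \chosen in the \emph{same} round. The defense is twofold: first, the atomicity of message consumption (a consumed message is deleted, so a single token cannot be read twice), and second, conflict rules~(i)--(iii), which ensure that when two particles contend over the same handover or expansion the scheduler picks exactly one winner and the loser detects its failure at its next activation and does not transition. I would also need to confirm that the rare double-creation guards already embedded in the code --- for instance the $\exists! p$ (unique-port) quantifiers at Line~\ref{explorer:stopc1} of Figure~\ref{algorithm:explorer} and the symmetric opposer test, and the \leader-meets-\leader resolution at Lines~\ref{leader:sameside1}--\ref{leader:sameside3} --- are exactly what prevents a simultaneous promotion of two particles to \chosen, completing the inductive step and hence the lemma.
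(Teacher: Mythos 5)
Your skeleton is recognizably the paper's own: a catalogue of the transitions into \chosen, a split between the ``first creation'' cases (the paper's (C1)--(C5)) and the hand-off case via $msg.switchtoleader$ (the paper's (C6)), and the observation that the hand-off conserves the count because the sender demotes itself in the same Compute step. The difference is that the paper does not phrase this as an induction with the bare invariant ``at most one \chosen''; it fixes the first round $r$ at which a \chosen exists and then argues, case by case, that after $r$ the entire creation machinery is permanently disabled. That disabling argument is the actual content of the proof, and it is the piece your plan leaves unproved. The dangerous scenario is not two particles consuming one token in the same round (your ``hard part''), but a creation case firing again at some later round $r'>r$: a \leader still in flight reaching a \marker, or reaching three empty locations, after a \chosen already exists. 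Ruling this out requires exactly the facts the paper establishes inside the proof: every particle on $L_0$ has received $msg.switchtoslave$, so no new \marker or \explorer can arise; each \marker sends $msg.candidate$ at most once and there are at most two \markers; all \explorer{\em s} on the relevant side have been turned into \slave{\em s}; and the new \chosen blocks any \explorer on the opposite side. You defer all of this to ``the uniqueness of the \leader established in the {\sf Candidate Checking} analysis,'' but no such standalone result exists --- Lemma~\ref{lemma:1candidate} gives only existence of a \leader --- so your inductive step does not close as stated; the invariant must be strengthened to carry these structural facts.

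A second, smaller inaccuracy: your token-conservation claim (``such messages are generated only by a \chosen \ldots in a way that simultaneously demotes the sender out of the \chosen state'') is false for the creation round. $msg.switchtocandidate$ is emitted by a \leader demoting itself to \slave (Line~\ref{candidate:switch}), and a \emph{single} copy promotes the recipient only to \leader (Line~\ref{explorer:stopc1}); it takes two copies on opposing ports to produce a \chosen (Line~\ref{leader:case3}). Likewise, in case (C1) the $msg.switchtoleader$ that creates the very first \chosen is sent by a \leader (Line~\ref{leader:case1}), not by a \chosen. So for the creation cases the conserved quantity is not ``number of \chosen{\em s} plus pending tokens'' but something involving the number of \leaderplu, which is why bounding the \leaderplu and showing the creation cases are mutually exclusive cannot be avoided. (You also omit the {\sf Opposite Sides} transitions --- \collector{\em .counting} becoming \chosen on $msg.winner$ and \collector{\em .done} on $msg.even$ --- though the paper's own case list omits these as well.)
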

\begin{proof}
The proof is by case analysis on how a particle becomes \chosen. Let $r$ be the first round at which there is a \chosen.
\begin{itemize}
\item (C1) The \leader particle $p$ on side $L_{1}$ receives a message $msg.candidate$ from a \marker  $p'$ (see Line~\ref{leader:case1} of Figure~\ref{algorithm:candidateii}). By construction, each \marker may send only a $msg.candidate$. If $p$ is a \leader at round $r$, then it received a $msg.candidate$ at round $r' \leq r$. Also, since there are only two \markers, no other \leader  becomes a \chosen by receiving a $msg.candidate$ (either Line~\ref{leader:case1} of Figure \ref{algorithm:candidateii} or Line~\ref{leader:case2} of Figure~\ref{algorithm:candidate}).  Moreover, since $p$ is not anymore a \leader, and since there could be at most two \leader{s}, no particle can become \chosen (by  Line~\ref{leader:case3} of Figure~\ref{algorithm:explorer}), otherwise there would be two \leader{s}. Notice that, after round $r$, the \chosen moves on side $L_{-1}$, blocking any new \explorer. Additionally, no \explorer can be on $L_0$ since the \leader has sent a message $msg.switchtoslave$ to every particle on $L_0$. Therefore, no new \marker can be created after round $r$. This ensures that no new \leader can be created, for any round $r' \geq r$; therefore, there is no  creation of a new \chosen (by Lines~\ref{leader:case1} of Figure \ref{algorithm:candidateii},Line \ref{leader:case2} of Figure~\ref{algorithm:candidate}, and by Line~\ref{leader:case4}  of Figure~\ref{algorithm:candidateii}).

\item (C2) A \slave receives a message $msg.candidate$ from a \marker  $p'$, and a message\\ $msg.switchtocandidate$ (see Line~\ref{leader:case2} of Figure~\ref{algorithm:candidate}). The proof follows an argument similar to the one of the previous case. 
\item (C3) An \explorer particle $p$ on side $L_{1}$ receives two $msg.switchtocandidate$ on two opposing ports (Line~\ref{leader:case3} of Figure~\ref{algorithm:explorer}). Since there could be at  most two \leader{\em s}, and a message  $msg.switchtocandidate$ can only be generated by a \leader, and no other \leader can be created by the \markers after round $r$, it follows that no other particle $p$ becomes \chosen at a round $r' \geq r$ (by  Line~\ref{leader:case3} of Figure~\ref{algorithm:explorer}).
Also, all the \explorer on side $L_{1}$ have been turned into \slave by the two \leaderplu. Moreover, no \explorer can be created after round $r$: every  particle still on $L_{0}$ received the message $msg.switchtoslave$ by one \leader.  The proof follows similarly to previous Case (C1).

\item (C4) A \leader particle $p$, moving on side $L_{1}$ occupies the last of the three empty locations at the end of the line (see Line~\ref{leader:case4}  of Figure~\ref{algorithm:candidateii}). In this case, only one \marker exists, so none of Cases (C1), (C2), and (C3) can be verified at round $\geq r$. Also, the \chosen will move on the other side of the line, blocking any moving \explorer; thus, since all \explorer on $L_1$ and $L_0$ have been blocked by the \leader itself, it is not possible that Case (C4) is verified after  round $r$. 

\item (C5) Two \leaderplu meets (Line~\ref{leader:sameside1} of Figure~\ref{algorithm:candidateii}). It is easy to see that  in this case only one of the \leaderplu is elected. Once the \leader becomes \chosen, the same scenario of previous Case (C4) occurs, the proof follows similarly.
 
\item (C6) A particle $p$ becomes \chosen after receiving the $msg.switchtoleader$: this message can only be sent by a \chosen at a previous round. Therefore, $r$ cannot be the first round at which there is a \chosen.
\end{itemize}

Thus, after round $r$, Cases (C1), (C2), (C3), (C4) and (C5) cannot be verified. The only case left is case (C6): it is easy to see that in this case the number of \chosen{\em s} cannot increase. Therefore, the \chosen is at most one. 
\end{proof}

\begin{theorem}\label{th:chosen}
If there exists  a \chosen at round $r$, then the \algo problem is solved. 
\end{theorem}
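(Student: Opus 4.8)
The plan is to treat the unique \chosen guaranteed by Lemma~\ref{lemma:chosen} as a single ``token'' that sweeps around $L_0$ and absorbs every correct particle into one growing compact line. First I would prove a \emph{conservation} claim for this token: once a \chosen exists at round $r$, at every later round either a \chosen exists or exactly one particle holds a pending $msg.switchtoleader$ that turns it into the \chosen at its next activation. Inspecting the {\sf Unique Leader} code, the \chosen keeps its state while expanding, contracting, handing over, or emitting $msg.switchtofollower$; the only two transitions that abandon the \chosen state---a virtual movement and a side switch---both emit a single $msg.switchtoleader$ to a neighbour, which becomes the next \chosen. Together with Lemma~\ref{lemma:chosen} and the fairness of the scheduler, this makes ``the position of the \chosen'' a well-defined entity that only advances along its current line $L_1$, $L_0$, or $L_{-1}$.

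The second step is a \emph{convoy invariant}. I would maintain that the \chosen together with the \recruiters it has created forms a connected group with the \chosen at the front, and that this group moves compactly by handovers. At each activation the \chosen either expands into an empty target cell or---if the target holds an \explorer or a \slave---performs a virtual move (a role swap) that advances its position by one, while the trailing \recruiters contract and hand over so the group stays connected and compact. Using the fairness of the scheduler together with the conflict-resolution rules (i)--(iii) of the model, I would argue that the convoy cannot deadlock: any pending expansion is into an empty cell, and any pending handover is matched by the adjacent \recruiter executing the same handover, so by rule (iii) it eventually succeeds. Hence the position of the \chosen advances infinitely often until no correct particle is left to collect.

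The third step is \emph{correct end detection and complete collection}. The \chosen recognises an end of $L_0$ by seeing three consecutive empty cells ahead on $L_0$; by Lemma~\ref{lemma:explorercreation} no gap of size $3$ is ever created on the still-unprocessed part of the line, so this test fires exactly at the two true endpoints. This fixes the itinerary: the \chosen sweeps $L_1$ until it reaches a \marker, steps onto $L_0$ by swapping with that \marker, sweeps back along $L_{-1}$ collecting every correct particle on $L_0$ (via $msg.switchtofollower$ followed by a handover) and on $L_{-1}$, reaches the opposite \marker, switches to $L_1$, and repeats. I would then check that every correct particle---whether it remained on $L_0$ or left as an \explorer onto $L_1$ or $L_{-1}$---lies in a region the \chosen eventually sweeps, receives a switch message, becomes a \recruiter, and joins the convoy. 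Since $|C|$ is finite and the \chosen advances infinitely often, after finitely many rounds the convoy contains all of $C$.

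Finally I would establish \emph{termination and stability}. When the last correct particle has been absorbed, the \chosen finds only empty cells ahead (three in a row) and stops; since handovers preserve compactness and the \chosen moves along a single fixed direction between side switches, the absorbed particles end up laid out contiguously along one straight line, with the \chosen as the unique leader by Lemma~\ref{lemma:chosen}. No particle is thereafter activated into a moving transition, so the configuration is stable and forms a single compact straight line with a unique leader for every $r\ge r^{*}$, which is exactly the single-line outcome required by \algo. The main obstacle is the interaction of the second and third steps: proving, against the semi-synchronous adversary, that the handover chain never permanently stalls \emph{and} that collecting particles off $L_0$ never opens a spurious size-$3$ gap ahead of the \chosen, since a single such gap would corrupt the end-detection test and break the entire itinerary argument; confirming that the resulting compact line is genuinely straight is the remaining piece of geometric bookkeeping.
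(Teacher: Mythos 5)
Your high-level mechanism (a unique \chosen sweeping around $L_0$ and absorbing every correct particle into a compact convoy of \recruiters via virtual moves and handovers) is indeed the engine of the paper's argument, and your appeal to Lemma~\ref{lemma:chosen} for uniqueness matches the paper. However, there is a substantive missing piece. The paper's proof is deliberately structured as a case analysis over \emph{how} the \chosen was created (occupying the last empty location, receiving two $msg.switchtocandidate$, the slave/candidate/meeting cases), because the correctness of the sweep depends on preconditions that are only guaranteed by the creation event itself: at round $r$ every particle remaining on $L_0$ has already received $msg.switchtoslave$ (so no new \explorer or \marker can ever appear), all \explorers on the \chosen's side have been turned into \slaves by the preceding \leader walk, and the \explorers on the opposite side are provably blocked at the \markers or at the location $l$ the \leader occupied. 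Your proposal treats the configuration at round $r$ as generic and asserts in step three that ``every correct particle lies in a region the \chosen eventually sweeps,'' but without the quiescence argument this is exactly what can fail: a still-moving \explorer on $L_{-1}$ could drift past the region being swept, or a late \explorer leaving $L_0$ could create a new \marker and a second \leader, undermining both the itinerary and Lemma~\ref{lemma:chosen}'s hypothesis going forward. This is the content your plan would need to supply, and it cannot be recovered from the convoy invariant alone.

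A second, smaller inaccuracy: in the {\sf Unique Leader} phase the \chosen does not detect the ends of $L_0$ by three consecutive empty cells. That test belongs to the \explorer/\leader phase; the \chosen turns around only upon meeting a \marker on $L_0$ (Line~\ref{leader:changeside} of Figure~\ref{algorithm:leader}), and the paper explicitly argues that the needed \marker must exist because otherwise no \leader could have been created. Consequently your worry about ``a spurious size-$3$ gap corrupting the end-detection test'' during the collection phase is not the real obstacle, while the genuine one --- establishing that the rest of the system is frozen at the moment the \chosen appears --- is left unaddressed.
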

\begin{proof}
Let $r$ be the first round at which a \chosen is elected. By Lemma~\ref{lemma:chosen}, the \chosen will be unique. We distinguish the possible cases.

\begin{enumerate}
\item At round $r$, the \chosen particle $p$ is created by Line~\ref{leader:case4} of Figure~\ref{algorithm:candidateii}, and it is moving on side $L_1$. So $p$ occupied an empty location $l \in L_{0}$ at round $r-1$. 
All \explorer{\em s} on side $L_1$ have been blocked by moving on $L_1$ as \leader. Moreover, after round $r-1$ no other \explorer will leave $L_0$ since all particles on line $L_0$ have received  $msg.switchtoslave$. 

Note that, from round $r-1$, any \explorer on side $L_{-1}$ near to location $l$ will switch to state \slave if it sees $p$. Since any \explorer (at distance $1$ from $l$) on side $L_{-1}$ tries to occupy $l$, no \explorer on side $L_{-1}$  can move further location $l$ in a round $r' \leq r-1$ (Line~\ref{candidate:expand} of Figure~\ref{algorithm:candidateii}).
In the following activations, $p$ moves on side $L_{-1}$, towards \marker ($p$ has the flag $flag.firstsideswitch$ set). In particular, the \chosen $p$ collects every non \marker particle on line $L_0$, by sending to each of them a $msg.switchtofollower$ message, and by pulling each one of them on its followers line using an handover; this pulling operation is executed either by the \chosen itself, or by one of the \recruiter{\em s} (there will always be at least one \recruiter on $L_{-1}$ nearby a \recruiter still on $L_0$). Moreover, the \chosen collects also 
each particle on $L_{1}$ (see Line~\ref{leader:switch} of Figure~\ref{algorithm:leader}). 

Note that the \marker has to exists, because otherwise no \leader could have been created. When \chosen reaches the other \marker at the opposite end of the line, it switches side again (Lines~\ref{leader:changeside} of  Figure~\ref{algorithm:leader}). At this point, the \chosen has collected every particle on $L_{-1}$.
From now on, the \chosen will collect any remaining particle on $L_{1}$ and $L_0$; these particles are not moving, since they are all in state either \slave or \recruiter. 
Eventually, the \chosen and its followers will form a straight connected line that includes every particle in $C$.

\item At round $r$, the \chosen particle $p$ at location $l$ on side $L_1$ is generated by Line~\ref{leader:case3} of Figure~\ref{algorithm:explorer}. All particles on $L_1$ are in state \slave, and, since all particles on line $L_0$ have received  {\it msg.switchtoslave}, no other \explorer will leave $L_0$. Similarly to the previous case, the \chosen will reach one of the \marker $p'$, collecting all particles between $l$ and the position of $p'$. 
Meanwhile, no \explorer on side $L_{-1}$ may move further than the \markers (see Line~\ref{explorer:stopcandidate} of Figure \ref{algorithm:explorer}). Now, the \chosen switches side (Lines~\ref{leader:changeside} of  Figure~\ref{algorithm:leader}), and sets $flag.firstsideswitch$. The last part of the the proof is analogous to the previous case. 

\item There are still three possible cases to consider: a \slave  becomes \chosen executing Line~\ref{leader:case2} of Figure~\ref{algorithm:candidate}; a \leader executes Line~\ref{leader:case1} of Figure~\ref{algorithm:candidateii}; and two \leaderplu meets (Line~\ref{leader:case5} of Figure~\ref{algorithm:candidateii}). All of them lead to the same scenario, in which the \chosen $p$ on $L_1$ is nearby a \marker,  all particles on $L_1$ are \slave, and all particle on $L_0$ received $msg.switchtoslave$. The analysis is similar to the previous case.
\end{enumerate}

In all cases, the theorem follows. 
\end{proof}

\subriteDescription{Opposite Sides}
\begin{figure}
\begin{framed}
\footnotesize
\begin{algorithmic}[1]
\Upon[\probe] 
   	    \If{$\exists p \in L_0 \land p=\text{\marker}$} \label{code:probe0}
	     \If{$\exists p \in Port | msg.seen \in p$}
	     \State Set State {\slave}
	     \EndIf
    	    \ElsIf{$contracted(myself) \land empty(getLocation(direction))$}
	    \State $expand(getLocation(direction))$
	    \ElsIf{$expanded(myself)$}
	    \State $contractToHead()$ 
	    \ElsIf{$\exists p \in getLocation(direction) \land contracted(p) \land p=\text{\slave}$}
	    \State $send(p,msg.switchtoprobe)$
	    \State Set State {\slave}
	        	    		\ElsIf  {$\exists port \in Ports | msg.switchtoleader \in port$} 
		 	\State Set direction and flags from $msg.switchtoleader$
		\State Set State {\chosen}  \label{code:probe1}
	  \EndIf
\End
\\
\Upon[\collector]
           \If{$contracted(myself) \land \exists p \in L_0 |  p  \neq  \text{\recruiter}$}
	   \State $send(p,msg.switchtofollower)$
           \ElsIf{$\exists p \in getLocation(direction) \land contracted(p)$}
           \State $send(p,msg.switchtocollector)$
	    \State Set State {\recruiter}
	       \ElsIf{$expanded(myself)$}
	      \If{$\exists p \in tailNeighbour \land p = \text{\recruiter} \land p \in L_0$} \label{collector:pullonlo}
	    \State $contractToHeadAndHandover(p)$
	   \ElsIf{$\exists p \in tailNeighbour \land p = \text{\recruiter}$}
	    \State $contractToHeadAndHandover(p)$
	   \Else
	   \State $contractToHead()$ 
	   \EndIf
	   \Else
	    	\State $checkAngle := angle(location(p),getLocation(direction)) = 60^{\circ}$
	   	\State $cond_1 = \exists p \in L_0 \land p=\text{\marker} \land checkAngle$
		\If{$cond_1 \land \neg flag.firstchange$}  \label{firstchange}
	    		\State Set $flag.firstchange$
	 		\State $direction=oppositeDirection()$
	 	 \ElsIf{$cond_1 \land flag.firstchange$}   \label{coll:counting}
		 	\State Set State {\collector.{\em counting}}
	    	  \ElsIf{$contracted(myself) \land empty(getLocation(direction))$}
	    		\State $expand(getLocation(direction))$
		\EndIf
	  \EndIf
\End
\end{algorithmic}
\end{framed}
\caption{Collector Algorithm -- Part One\label{algorithm:collector}}
\end{figure}
\begin{figure}
\begin{framed}
\footnotesize
\begin{algorithmic}[1]
\setcounter{ALG@line}{40}
\Upon[\collector{\em .counting}]
   \If{$\neg flag.firstrun$}
    \State $send(myself,msg.other)$ \label{bootstartp:count}
   \State Set $flag.firstrun$
   \ElsIf{$\exists p \in Port | msg.other \in p$} \label{newprobe}
   	\If{$\exists p' \in Neighbours \land p'=\text{\recruiter}$}
	\State $send(p',msg.probe)$
	\Else
	\State $\nexists p'' | p'' \in Neighbours \land p''= \text{\recruiter}$ \label{nomoreprobe}
	\State $send(p'',msg.done)$
	\State $direction=oppositeDirection()$
	\State Set State \text{\collector{\em .done}}
	\EndIf
	   \ElsIf{$\exists p \in Port | msg.winner \in p$}
	   	\State Set State \text{\chosen}
   \EndIf
\End
\\
\Upon[\collector{\em .done}] 
 	 \If{$\exists p \in L_0 \land (p=\text{\marker} \lor p= \text{\chosen})$} \label{collecotrdone1}
	 	    \State End of Cycle
	         \ElsIf{$\exists p \in Port | msg.even \in p$}
	     \State Set State {\chosen}
    	    \ElsIf{$contracted(myself) \land empty(getLocation(direction))$}
	    \State $expand(getLocation(direction))$
	    \ElsIf{$expanded(myself)$}
	    \State $contractToHead()$ 
	    \ElsIf{$\exists p \in getLocation(direction) \land contracted(p) \land p=\text{\slave}$}
	    \State $send(p,msg.switchtocollectordone)$
	    \State Set State {\slave}
	        	    		\ElsIf  {$\exists port \in Ports | msg.switchtoleader \in port$} 
		 	\State Set direction and flags from $msg.switchtoleader$
		\State Set State {\chosen}   \label{collecotrdone2}
	  \EndIf
\End
\end{algorithmic}
\end{framed}
\caption{Collector Algorithm -- Part Two\label{algorithm:collectorii}}
\end{figure}
This sub-algorithm starts when a \leader on $L_{1}$ (respectively,  $L_{-1}$) reaches a \marker $p$ with flag $flag.candidate$ set: the \leader realizes that there is another \leader  moving in the same direction (either clockwise or counter-clockwise) on $L_{-1}$ (respectively, $L_{1}$) (see Lines \ref{cadidate:opposing}-\ref{leader:collector} of Figure \ref{algorithm:candidateii}). Also, the \leader becomes \collector, switches direction, and moves towards the other \marker  $p'$; during this movements, the \collector forces every particle it encounters (also on $L_0$) to become a \recruiter. Moreover, when contracting, it pulls any \recruiter near to its tail.  

Once the \collector reaches $p'$, it reverts again direction to reach $p$ (Line \ref{firstchange} of Figure \ref{algorithm:collector}); once it reaches $p$, it switches state to \collector{\em .counting} (Line \ref{coll:counting} of Figure \ref{algorithm:collector}).
Let us assume that the \collector has recruited at least one particle. 
The \collector{\em.counting} propagates a message $msg.probe$ on its  \recruiter{\em s} on $L_1$ (Line \ref{bootstartp:count} of Figure \ref{algorithm:collectorii}). Once the $msg.probe$ reaches a \follower $p''$ with only one neighbour, particle $p''$ switches state to \probe (Line \ref{setprobe} of Figure \ref{algorithm:leader}). The \probe travels to reach \marker $p'$, and it temporarily stops  if it sees another \probe. The code for the \probe is in Figure \ref{algorithm:collector} Lines \ref{code:probe0}-\ref{code:probe1}. When \marker $p'$ sees a \probe and a \collector{\em .counting}, it sends to the \probe a message $msg.seen$. The \probe switches state to \slave when it reads such message. 

For each $msg.seen$ sent, the \marker sends a message $msg.other$ to a neighbour \collector{\em.counting}. 
When the \collector{\em.counting} receives a $msg.other$ from a \marker, it propagates a new message $msg.probe$ (Line \ref{newprobe} of Figure \ref{algorithm:collectorii}).
 Eventually, the \collector{\em.counting} receives $msg.other$ and it sees that it does not have any \recruiter (Line \ref{nomoreprobe} of Figure \ref{algorithm:collectorii}); in this case, it switches state to \collector{\em.done} and it moves until it becomes neighbour of $p'$. Before moving, it notifies the \marker, by sending a message $msg.done$.
While the \collector{\em.done}  moves, it turns to \recruiter any \slave it meets on its way, and it waits if it sees a \probe (The code fo the \collector{\em.done} is in Lines \ref{collecotrdone1}-\ref{collecotrdone2} of Figure \ref{algorithm:collectorii}).
If the \marker sees a \collector{\em .done} and a \collector{\em .counting}, that does not have a $msg.other$ pending, it sends $msg.winner$ to it (Line \ref{collectorwinner} of Figure \ref{algorithm:activated}). Otherwise, if the \marker sees only a \collector{\em .done} and it was notified by a $msg.done$, then it sends $msg.even$ to the \collector{\em .done} and it becomes a \recruiter (Line \ref{collectoreven} of Figure \ref{algorithm:activated}). 
If a \collector{\em .done} receives a $msg.even$, then it switches state to \chosen.
If a \collector{\em.counting} receives a $msg.winner$, then it switches state to \chosen.
An example run with \collector{\em s} is in Figure \ref{xf:shape3}.

\begin{figure*}[ht]
\begin{center}
  \subfloat[Three \explorers start moving.]{
  \includegraphics[scale=0.5]{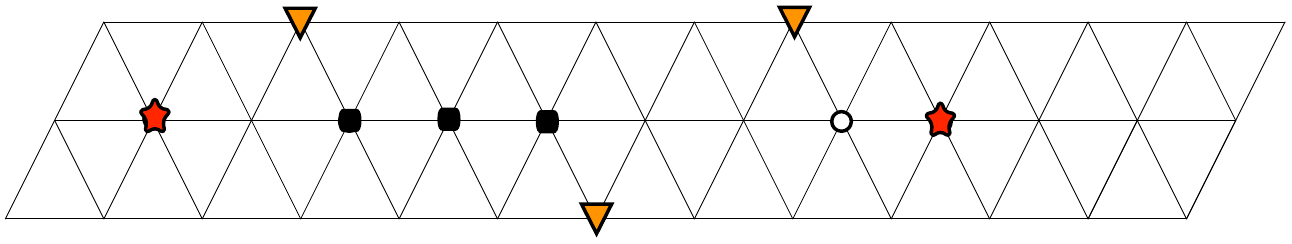}
  }
    \qquad\quad
  \subfloat[ \explorers are moving towards different \marker{\em s}: we will one \leader for each side. ]{
  \includegraphics[scale=0.5]{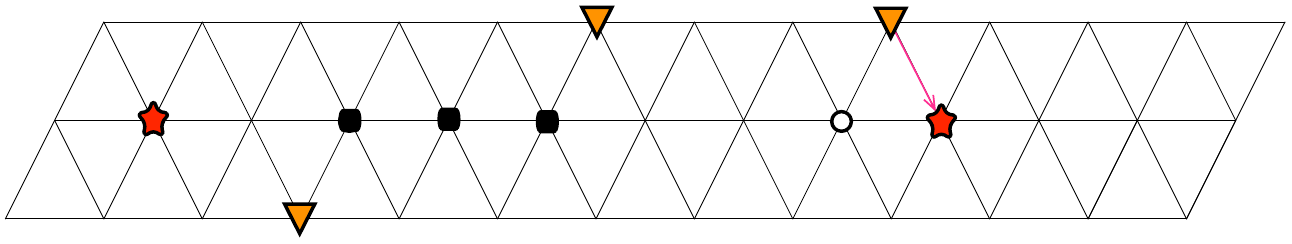}
  }
   \\
  \subfloat[ A \leader is created on $L_{1}$ and an \explorer asks to \marker to became a \leader on $L_{-1}$.]{
  \includegraphics[scale=0.5]{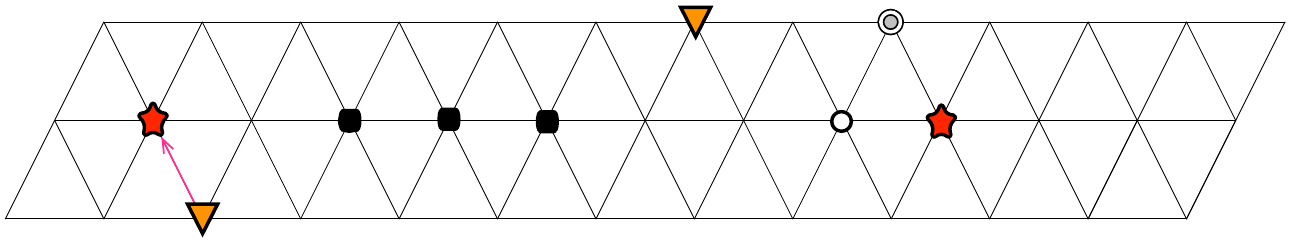}
  }
    \qquad\quad
  \subfloat[Each \leader checks for the presence of the other, they will both become a \collector.]{
  \includegraphics[scale=0.5]{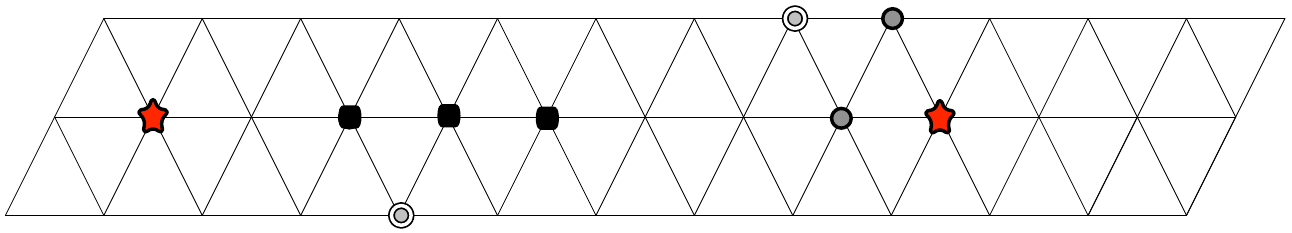}
  }
  \\
  \subfloat[The \collector have collected any correct particle.]{
  \includegraphics[scale=0.5]{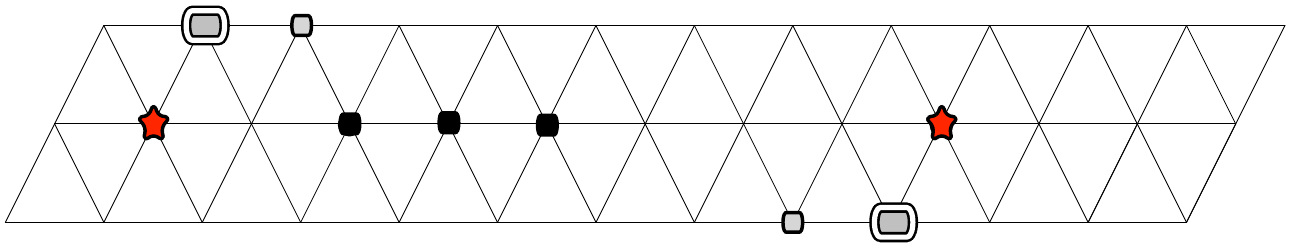}
  }
  \qquad\quad
  \subfloat[The \collector.{\em counting}'s send probes to check if the particles have been divided equally.]{
  \includegraphics[scale=0.5]{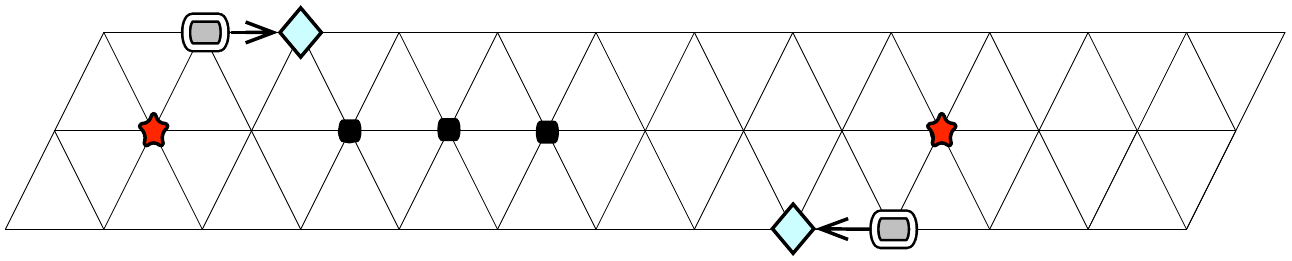}
  }
  \\
  \subfloat[Each \collector.{\em counting} sees that it has no other probe.  ]{
  \includegraphics[scale=0.5]{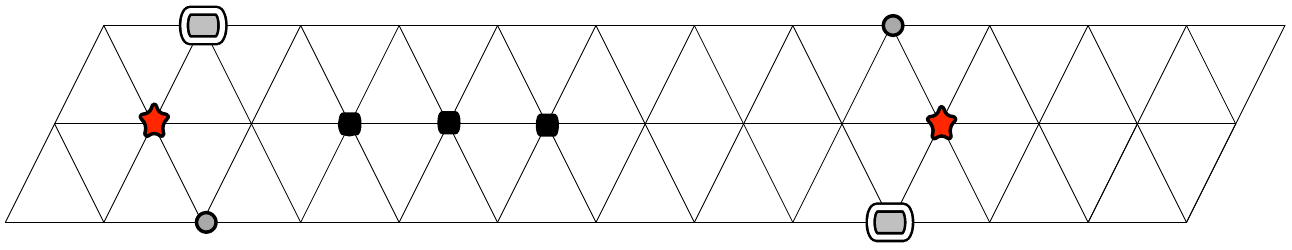}
  }
      \qquad\quad
  \subfloat[Each \collector.{\em counting} becomes \collector.{\em done} and it starts moving towards the other \marker.]{
  \includegraphics[scale=0.5]{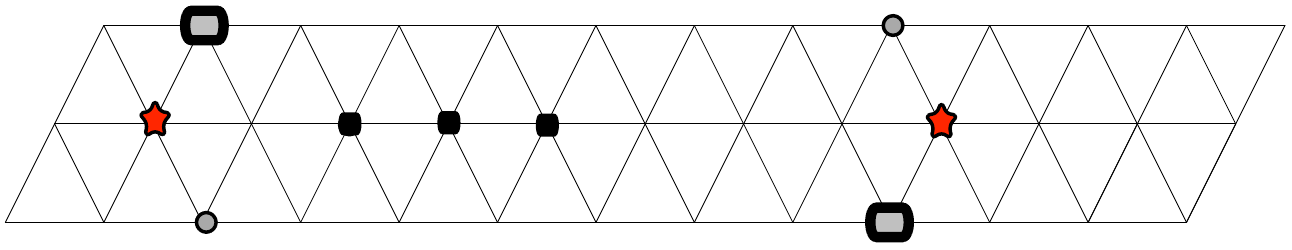}
  }
  \\
  \subfloat[Each \collector.{\em done} reaches a \marker and it is notified by the \marker that also the other \collector was in state \collector.{\em done}. This implies that particles are equally divided in two sets.]{
  \includegraphics[scale=0.5]{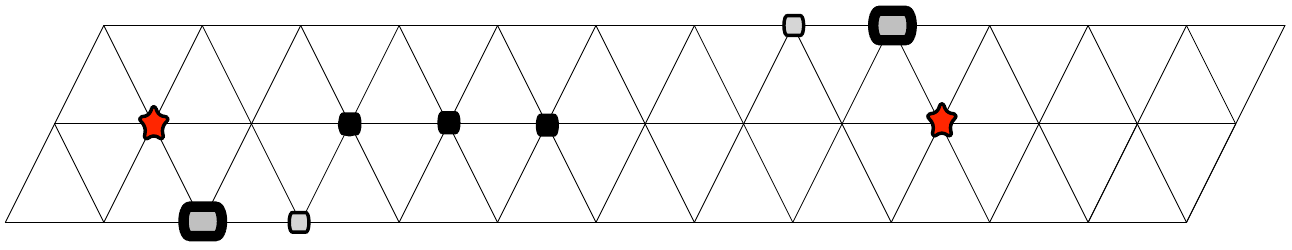}
  }
  \qquad\quad  
  \subfloat[The \marker becomes a \recruiter and each  \collector.{\em done} a \chosen with the moving direction indicated by the arrow. Thus two equa lines will be formed]{
  \includegraphics[scale=0.5]{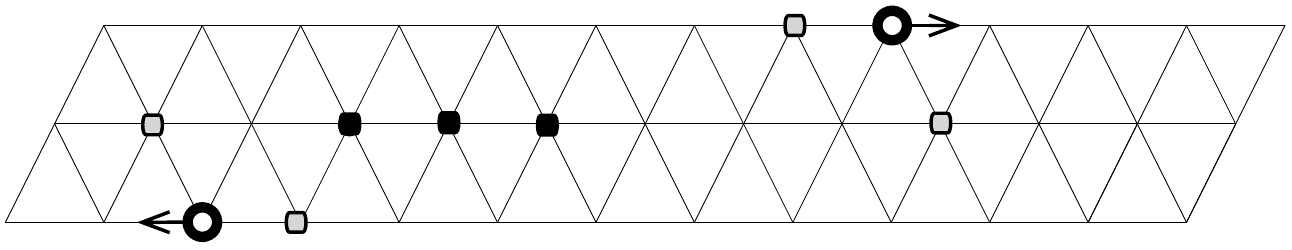}
  }
\end{center}
\caption{Opposite Sides, example run.}
\label{xf:shape3}
\end{figure*}

\begin{theorem}\label{th:collector}
If there exists  a \collector at round $r$, then the \algo problem is solved. 
\end{theorem}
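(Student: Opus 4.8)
The plan is to follow the execution from the first appearance of a \collector and show that it is forced into exactly one of the two admissible terminal configurations of the \algo problem: a single line led by one \chosen, or two equal lines each led by its own \chosen. \textbf{Step 1 (the configuration is symmetric).} A particle enters state \collector only through Line~\ref{leader:collector} of Figure~\ref{algorithm:candidateii}, i.e.\ when a \leader reaches a \marker whose flag $flag.candidate$ is already set. That flag is set (Line~\ref{marker:if1} of Figure~\ref{algorithm:activated}) exactly when a \leader is \emph{born} at that \marker. Hence the existence of one \collector certifies that a \leader was born at \emph{each} of the two extremities, and that the two \leaderplu then circulated around $L_0$ in the same rotational sense, one on $L_1$ and one on $L_{-1}$, toward the opposite \marker. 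I would argue that, by fairness, each such \leader reaches the opposite \marker, finds $flag.candidate$ set there, and executes Line~\ref{leader:collector}; since the two walks lie on disjoint sides they do not obstruct one another. This yields exactly two \collector{\em s}, one per side, each anchored at one of the two \markers.

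\textbf{Step 2 (both sides collect everything).} Each \collector sweeps from its anchor \marker to the opposite one and back (Figure~\ref{algorithm:collector}), turning every particle it meets, including those left on $L_0$, into a \recruiter and dragging them with handovers. Reusing the collection analysis behind Theorem~\ref{th:chosen}, I would show that when a \collector returns to its anchor and switches to \collector{\em.counting} (Line~\ref{coll:counting}) it heads one compact segment of \recruiter{\em s}; that between them the two segments contain \emph{every} correct particle, each in exactly one segment (no particle is dropped, by the no-gap-of-three invariant of Lemma~\ref{lemma:explorercreation}, and none is claimed twice, since a \recruiter ignores further recruitment); and that the two segments lie on $L_1$ and $L_{-1}$ anchored at opposite \markers.

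\textbf{Step 3 (the length comparison is correct --- the crux).} The two segments must now be compared for equality under the semi-synchronous adversary, and this is the step I expect to be the main obstacle. The comparison is mediated by the two \markers as rendezvous points: a \collector{\em.counting} emits a \probe along its segment (Line~\ref{newprobe}), whose far end detaches as a \probe particle (Line~\ref{setprobe}) and walks to the opposite \marker; that \marker fires $msg.seen$ only when it \emph{simultaneously} sees the \probe and the other side's \collector{\em.counting} (the \marker rules of Figure~\ref{algorithm:activated}), and for each such event it releases a single $msg.other$ that alone licenses the partner side to emit its next \probe (Line~\ref{newprobe}). I would establish the invariant that this handshaking retires \recruiter{\em s} from the two segments in strict alternation, so that the difference of the two segment lengths is preserved throughout; the adversary may stall or reorder probes, but because no \marker ever advances without both witnesses present it can neither double-count a particle nor skip one. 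The difference is then revealed at the first moment a \collector{\em.counting} is asked for a \probe it no longer has (Line~\ref{nomoreprobe}), whereupon it becomes \collector{\em.done} and notifies its \marker with $msg.done$.

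\textbf{Step 4 (resolving the two outcomes).} Two sub-cases close the proof. If the segments are unequal, the shorter side empties first and its \collector{\em.done}, while the longer side is still \collector{\em.counting}, becomes neighbour of a \marker; that \marker fires $msg.winner$ (Line~\ref{collectorwinner}) and the surviving counter becomes a \chosen. A \chosen having appeared, Theorem~\ref{th:chosen} applies: the \chosen virtually moves through the losing \collector{\em.done} segment (Line~\ref{collecotrdone2}) and, touring both sides, gathers every correct particle into one line. If the segments are equal, both sides reach \collector{\em.done} and notify their \markers; each \marker, seeing a \collector{\em.done} with a matching $msg.done$ on the opposite port, answers $msg.even$ and demotes itself to \recruiter (Line~\ref{collectoreven}), turning its \collector{\em.done} into a \chosen after it has re-gathered the \slaves produced during counting. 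The two resulting \chosen{\em s} sit on opposite sides and, by the invariant of Step~3, head segments of equal length, so they straighten into two equal lines. In both sub-cases the terminal configuration is stationary for every later round, so the \algo problem is solved.
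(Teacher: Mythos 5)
Your proposal follows the same decomposition as the paper's proof of Theorem~\ref{th:collector}: pairing of two \collector{\em s}, one per side; collection of all of $C'$ into two segments; probe-based alternating size comparison mediated by the \marker{\em s}; and a case split on equal versus unequal segment sizes. Steps 1 and 3 and the unequal sub-case of Step 4 match the paper's argument. Two points, however, need repair.

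First, in Step 2 your justification that ``no particle is dropped'' appeals to the no-gap-of-three invariant of Lemma~\ref{lemma:explorercreation}. That invariant only guarantees that three consecutive empty locations reliably mark an extremity of $L_0$; it says nothing about whether a particle left behind on $L_0$ is ever recruited. The paper's actual argument is different: on the return leg of the \collector's sweep the portion of the side line ahead of it is empty, so there is a round in which the \collector is expanded with its tail adjacent to any leftover particle $p^*$ on $L_0$, and the contraction of Line~\ref{collector:pullonlo} of Figure~\ref{algorithm:collector} performs the handover that pulls $p^*$ into the segment. Without some such argument you have not shown that the two segments actually partition $C'$, which is what the counting phase presupposes. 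Second, in the equal-size sub-case of Step 4 you assert that both sides reach \collector{\em.done}, both \marker{\em s} answer $msg.even$, and two equal lines result. Under the semi-synchronous scheduler this is not forced: the paper explicitly treats the race in which $c$ becomes \collector{\em.done} and reaches the opposite \marker before $c'$ is even activated, so that the \marker fires $msg.winner$ and a single \chosen is elected despite $|s|=|s'|$. That outcome still solves the problem, since a single line is admissible, but your proof must acknowledge and resolve this race to exclude an inconsistent configuration in which one side proceeds as winner while the other proceeds as if the sizes were declared even.
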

\begin{proof}
If a \leader, on side $L_1$, becomes a \collector when it reaches \marker $p$, then there exists a \leader on side $L_{-1}$  that will become \collector when it reaches \marker $p'$; let $c'$ be such \collector.
The \collector $c$ moves from $p$ to $p'$, and then back to $p$. Let $C'$ be the set of particle in $C$ excluding the \marker{\em s} and the \collector{\em s}.
When $c$ reaches $p$ for the second time, it switches state to \collector{\em .counting}; similarly, also $c'$ will eventually reach $p'$, switching state to \collector{\em .counting}. In the following, we will show that, when this occurs, all particles in $C'$ are followers of either $c$ or $c'$, thus forming two segments $s$ and $s'$, respectively. Note that, for each of the  \collector{\em .counting} followers, it might be that one of the particles could be a \probe moving to a \marker. For now, let us assume that this \probe is still belonging to its initial segment.  

Also, let us assume that there exists a particle $p^*$ that is still in $L_0$ and that does not belong to $c$ or $c'$; that is, no one pulled $p^*$. However, when $c$ is going back from $p$ to $p'$, no particle is on $L_1$; therefore, there will be a round in which $c$ is expanded and its tail is neighbour of $p^*$. By construction, when $p$ contracts, it will pull $p^*$ (Line~\ref{collector:pullonlo} of Figure~\ref{algorithm:collector}).  For particles on $L_1$ and $L_{-1}$ the theorem easily follows, since they will be collected when the corresponding \collector is going from the \marker on which it switched state to the other one. 

In the the next step of the sub-algorithm, the sizes of $s$ and $s'$ are compared by using the \probes. Each segment decreases by $1$ by moving a \probe to the next \marker, and then it waits for a signal from the probe of the other \collector to further decrease. We have two possible cases:

\begin{enumerate}
\item If $|s| \leq |s'|$, it is easy to see that $c$ first receives a $msg.other$, and then it becomes a \collector{\em .done} moving towards $p'$.
Thus the \marker $p'$ sees a \collector{\em .counting} $c'$ and a \collector{\em .done} $c$; therefore, $c'$ becomes a \chosen. From now on, it is easy to see that $c'$ will collect every particle on its way: this case is similar to the case of the {\sf Unique Leader} sub-algorithm (refer to Therem~\ref{th:chosen}).

\item If $|s| = |s'|$, both \collector{\em .counting} will receive a $msg.other$ while there is no \follower in the neighbourhood. 
Since the scheduler is semi-synchronous, it might be that, before $c'$ it is activated,  $c$ becomes  \collector{\em .done}  and it reaches the \marker.
In this case, if the \marker sends the $msg.winner$ to $c'$, upon activation $c'$ will find this message and it will become \chosen, and previous case applies.
Otherwise, $c'$ sent a $msg.done$ to the \marker; when $c$ reaches $p'$, the \marker, upon activation, sends $msg.even$ to $c$.

Since $|s|=|s'|$ and while $c$ is moving it turns every \slave into a \recruiter, we have that half of the processes in $C'$ are followers of $c$ and connected. Also, when the \marker sends $msg.even$ to $c$ it also becomes a \follower.
At this point, $c$ becomes a \chosen and it moves, eventually bringing $p'$ in a straight line. The same will be done by $c'$. Therefore, in this case we will have two lines of equal size, and the theorem follows.
\end{enumerate}\end{proof}

Finally, we have:

\begin{theorem}
Starting from any initial configuration, the \algo problem is solved.
\end{theorem}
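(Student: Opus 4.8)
The plan is to prove the statement by a case analysis on the number $f$ of faulty particles, reducing each case to one of the three main results already established.

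First I would dispose of the case $f=0$. Here no \explorer is ever created, the {\sf Fault Checking} sub-algorithm runs to completion, and Theorem~\ref{thf:0} tells us directly that either a single leader or two equal-sized leaders are elected and the corresponding line(s) are formed. This case needs no further argument.

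The substantive case is $f>0$. The strategy is to show that the execution always reaches a configuration in which some particle is in state \chosen, or in which some particle is in state \collector; once either happens, I invoke Theorem~\ref{th:chosen} or Theorem~\ref{th:collector}, respectively, and the \algo problem is solved. To establish that one of these two terminal states is reached, I would start from Lemma~\ref{lemma:1candidate}, which guarantees a round with at least one \leader, and then trace the execution of the {\sf Candidate Checking} sub-algorithm from that point. A \leader walks toward the opposite extremity of the line, turning every \explorer and \slave it meets into the new \leader (virtual movement) and freezing every particle on $L_0$ with a $msg.switchtoslave$ so that no new \explorer---and hence no new \marker---can appear behind it. The relevant point is that the enumeration of cases (C1)--(C5) versus the {\sf Opposite Sides} branch is exhaustive: when the walking \leader finally reaches the far end, it either (i) meets a \marker that has not committed to a candidate, meets three empty cells, meets another \leader, or receives a double $msg.switchtocandidate$, all of which elect a unique \chosen, or (ii) meets a \marker whose $flag.candidate$ is already set, which converts the \leader into a \collector. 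Since these alternatives cover every way the walk can terminate, one of the two terminal states must arise, and the theorem then follows by the quoted theorems.

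The main obstacle I anticipate is the liveness argument that the \leader's walk actually terminates, rather than being stalled forever by the semi-synchronous adversary through collisions, failed handovers, or two \leaderplu repeatedly bouncing off one another by switching directions. I would handle this by combining the gap invariant of Lemma~\ref{lemma:explorercreation} (no gap ever reaches size $3$, so the three-empty-cells end-of-line test remains meaningful and the walking particle cannot be fooled in the interior of the line) with the fairness of the scheduler: each correct particle is activated infinitely often, the $msg.changedirection$ mechanism prevents two \explorers from switching directions against each other more than once, and any particle blocked by a neighbour is eventually unblocked because that neighbour is itself activated and moves. Under these two facts the \leader makes net progress toward the opposite \marker in finitely many rounds, so the exhaustive case analysis above applies and, via Theorem~\ref{th:chosen} or Theorem~\ref{th:collector}, the theorem follows.
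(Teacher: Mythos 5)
Your proposal follows essentially the same route as the paper: dispose of $f=0$ via Theorem~\ref{thf:0}, invoke Lemma~\ref{lemma:1candidate} to obtain a \leader, argue that the \leader's walk must terminate in one of the exhaustively enumerated outcomes (a unique \chosen via cases (C1)--(C5), or a \collector via the {\sf Opposite Sides} branch), and conclude by Theorem~\ref{th:chosen} or Theorem~\ref{th:collector}. Your explicit treatment of liveness of the walk (gap invariant plus scheduler fairness) is a sound elaboration of a point the paper largely leaves implicit, but it does not change the structure of the argument.
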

\begin{proof}
If $f=0$, the theorem follows by Theorem~\ref{thf:0}. Let us thus consider the case where $f >0$. By Lemma~\ref{lemma:1candidate}, a leader will eventually be elected. 

By construction, the \leader creation always occurs at a marker $p$, and then it moves towards the other extreme of the line. Let us suppose that an \explorer sees the extremity of the initial line, and let us consider the first round $r'$ at which this occurs. Since before round $r'$ no \chosen or \collector can be created, by Lemma~\ref{lemma:1candidate} three consecutive empty locations correctly mark the end of the line. Therefore, a \leader can correctly recognise the end of the line either by the presence of a \marker, or by three consecutive empty locations. 

Three cases can occur:

\begin{enumerate} 
\item[(1)] There is an empty location that is occupied by \leader: in this case, {\sf Unique Leader} sub-algorithm is run, a \chosen is elected, and, by Theorem~\ref{th:chosen}, the theorem follows.
\item[(2)] There is \marker that has not elected a \leader, and the leader locks the \marker by receiving a $msg.switchtocandidate$: in this case,  {\sf Unique Leader} sub-algorithm is run, a \chosen is elected, and, by Theorem~\ref{th:chosen}, the theorem follows.
\item[(3)] There is \marker that has elected a \leader: in this case, {\sf Opposite Sides} is run, and, by Theorem~\ref{th:collector}, the theorem follows.
\end{enumerate}
Notice that, it might also be possible that a leader reaches the end of the line marked by an \empty location and that another particle occupies it because of the semi-synchronous scheduler. In this case, a \marker is created, and either Case (2) or (3) above applies. Therefore, in all possible cases, the theorem follows. 
\end{proof}

\bibliographystyle{plain}

\begin{thebibliography}{99}




\bibitem{AgP06}
N.~Agmon and D.~Peleg.
\newblock Fault-tolerant gathering algorithms for autonomous mobile robots.
\newblock {\em SIAM Journal on Computing}, 36(1):56--82, 2006.



%




%
%



\bibitem{xCaDRR16}
S.~Cannon, J.J.~Daymude, D.~Randall, and A.W.~Richa. 
\newblock A Markov chain algorithm for compression in self-organizing particle systems. 
\newblock In {\em Proceedings of the 35th ACM Symposium on Principles of Distributed Computing}, 279--288, 2016.

%
%
%

%
%


\bibitem{CiFPS12} 
M.~Cieliebak, P.~Flocchini, G.~Prencipe, and N.~Santoro. 
\newblock Distributed computing by mobile robots: gathering. 
\newblock {\em SIAM Journal on Computing}, 41(4):829--879, 2012.



\bibitem{noi2017}
 G.A.~Di Luna, P.~Flocchini, N.~Santoro, G.~Viglietta, and Y.~Yamauchi.
 \newblock Brief Announcement: Shape formation by programmable particles.
\newblock In  {\em Proceedings of the  31st International Symposium on Distributed Computing}, to appear, 2017.


\bibitem{fun2017}
 G.A.~Di Luna, P.~Flocchini, G.~Prencipe, N.~Santoro, and G.~Viglietta.
 \newblock  A Rupestrian algorithm. 
 \newblock In {\em Proceedings of 8th International Conference on Fun with Algorithms}, 14:1-14:20, 2016.

%
%
%

\bibitem{DeDGRSS14} 
Z.~Derakhshandeh, S.~Dolev, R.~Gmyr, A.W.~Richa, C.~Scheideler, and T.~Strothmann. 
\newblock Brief announcement: Amoebot - a new model for programmable matter. 
\newblock In {\em Proceedings of the 26th ACM Symposium on Parallelism in Algorithms and Architectures}, 220--222, 2014.

\bibitem{xDaDGP+17}
J.J~Daymude, Z.~Derakhshandeh, R.~Gmyr, A.~Porter, A.W.~Richa, C.~Scheideler, and T.~Strothmann. 
\newblock On the runtime of universal coating for programmable matter. 
\newblock In {\em Proceedings of 22nd International Conference on DNA Computing and Molecular Programming}, 148--164, 2016..

\bibitem{xDaGPR+17}
J.J.~Daymude, R.~Gmyr, A.W.~Richa, C.~Scheideler and T.~Strothmann.
\newblock Improved leader election for self-organizing programmable matter.
\newblock arXiv:1701.03616, 2017.

 

\bibitem{DeGRSBRS16}
Z.~Derakhshandeh, R.~Gmyr, A.W.~Richa, C.~Scheideler, and T.~Strothmann.
\newblock Universal shape formation for programmable matter.
\newblock In {\em Proceedings of the 28th ACM Symposium on Parallelism in Algorithms and Architectures}, 289--299, 2016.

\bibitem{DeDGRSS15} 
Z.~Derakhshandeh, R.~Gmyr, A.W.~Richa, C.~Scheideler, and T.~Strothmann. 
\newblock An algorithmic framework for shape formation problems in self-organizing particle systems. 
\newblock In {\em Proceedings of the 2nd International Conference on Nanoscale Computing and Communication}, 21:1--21:2, 2015.

\bibitem{DeGRSBRS15} 
Z.~Derakhshandeh, R.~Gmyr, T.~Strothmann, R.A.~Bazzi, A.W.~Richa, and C.~Scheideler. 
\newblock Leader election and shape formation with self-organizing programmable matter. 
\newblock In {\em Proceedings of the 21st International Conference on DNA Computing and Molecular Programming}, 117--132, 2015.


\bibitem{xDerGMR+17}
Z.~Derakhshandeh, R.~Gmyr, A.W.~Richa, C.~Scheideler, and T.~Strothmann. 
\newblock Universal coating for programmable matter. 
\newblock {\em Theoretical Computer Science}, to appear, 2017.



 \bibitem{xDoFRNV+16}
S.~Dolev, S.~Frenkel, M.~Rosenbli, P.~Narayanan, and K.M.~Venkateswarlu.
\newblock In-vivo energy harvesting nano robots. 
\newblock In {\em the 3rd IEEE International Conference on the Science of Electrical Engineering}, 1--5, 2016.

\bibitem{EmLUW14}
Y.~Emek, T.~Langner, J.~Uitto,   R.~Wattenhofer.
\newblock  Solving the ANTS problem with asynchronous finite state machines.
\newblock In {\em Proceedings of the 41st International Colloquium on Automata, Languages, and Programming}, 471--482, 2014.





\bibitem{FlPS12}
P.~Flocchini, G.~Prencipe, and N.~Santoro.
{\em Distributed computing by oblivious mobile robots}.
Morgan \& Claypool, 2012. 

\bibitem{xFlPSW08}
P.~Flocchini, G.~Prencipe, N.~Santoro, and P.~Widmayer. 
\newblock Arbitrary pattern formation by asynchronous, anonymous, oblivious robots. 
\newblock {\em Theoretical Computer Science}, 407(1):412--447, 2008.

\bibitem{xFuYKY16}
N.~Fujinaga, Y.~Yamauchi, S.~Kijima, and M.~Yamashita.
\newblock Pattern formation by oblivious asynchronous mobile robots.
\newblock {\em SIAM Journal of Computing}, 44(3):740--785, 2016.









%



%
%



\bibitem{xPa14}
 M.J.~Patitz. 
\newblock An introduction to tile-based self-assembly and a survey of recent results. 
\newblock {\em Natural Computing}, 13(2):195--224, 2014.
 
\bibitem{SuY99}
I.~Suzuki and M.~Yamashita.
\newblock Distributed anonymous mobile robots: formation of geometric patterns.
\newblock {\em SIAM Journal on Computing}, 28(4):1347--1363, 1999.

\bibitem{xToM91}
T.~Toffoli and N.~Margolus. 
\newblock Programmable matter: concepts and realization. 
\newblock {\em Physica D: Nonlinear Phenomena}, 47(1):263--272, 1991.

\bibitem{WaWA04} 
J.E.~Walter, J.L.~Welch, and N.M.~Amato. 
\newblock Distributed reconfiguration of metamorphic robot chains. 
\newblock {\em Distributed Computing}, 17(2):171--189, 2004.




%
%
%

\bibitem{YUKY15}
 Y.~Yamauchi, T.~Uehara, S.~Kijima, and M.~Yamashita.
\newblock Plane formation by synchronous mobile robots in the three dimensional euclidean space. 
\newblock {\em Journal of the ACM}, 64(3):1--43, 2017.





\end{thebibliography}

  

\end{document}